\documentclass[envcountsame]{llncs}

\newcommand{\longversion}[1]{#1}
\newcommand{\shortversion}[1]{}

\usepackage{times}

\longversion{
\usepackage[totalwidth=450pt,totalheight=640pt]{geometry}
\setlength{\oddsidemargin}{5mm}
\setlength{\evensidemargin}{5mm}
}
 
\usepackage{tikz}
\usepackage{url}
\usepackage{amsmath,amssymb}
\usepackage{xspace,tikz}
\usepackage{mathrsfs,verbatim}
\usepackage{enumitem}

\usepackage{thmtools,thm-restate}

\spnewtheorem{fact}{Fact}{\bfseries}{\itshape}

\newcommand{\Nat}{\mathbb{N}}
\newcommand{\hy}{\hbox{-}\nobreak\hskip0pt}

\def\hy{\hbox{-}\nobreak\hskip0pt} 

\newcommand{\SB}{\{\,} \newcommand{\SM}{\;{:}\;} \newcommand{\SE}{\,\}}
 
\newcommand{\Card}[1]{|#1|}

\newcommand{\CCC}{\mathcal{C}} 
\newcommand{\WWW}{\mathcal{W}} 
\newcommand{\RRR}{\mathcal{R}}

\newcommand{\NP}{\text{\normalfont NP}}


\newcommand{\mtext}[1]{\text{\normalfont\itshape #1}} 

\newcommand{\gleq}{\mathop{\diamondsuit}}

\newcommand{\rwc}[1]{\mtext{rwc}_{#1}}
\newcommand{\vcn}{\mtext{vcn}}
\newcommand{\nd}{\mtext{nd}}
\newcommand{\tw}{\mtext{tw}}
\newcommand{\rw}{\mtext{rw}}

\def\mx#1{\mbox{\boldmath$#1$}}

\def\MS#1{\mbox{MSO}}

\newcommand{\MSO}{\mbox{MSO}\xspace}

\def\lMS{\mbox{\textsc{LinEMS}}}

\date{}

\let\doendproof\endproof
\renewcommand\endproof{~\hfill$\qed$\doendproof}

\begin{document}

\pagestyle{plain}

\title{Meta-Kernelization with Structural Parameters\longversion{\thanks{Research supported by the European Research Council (ERC),
    project COMPLEX REASON 239962.}}}

\author{Robert Ganian \and Friedrich Slivovsky \and Stefan Szeider}

\institute{
 Institute of Information Systems,
 Vienna University of Technology, Vienna, Austria
\email{rganian@gmail.com,fslivovsky@gmail.com,stefan@szeider.net}
}

\maketitle

\begin{abstract}
  \noindent Meta-kernelization theorems are general results that provide
  polynomial kernels for large classes of parameterized problems.  The known
  meta-kernelization theorems, in particular the results of Bodlaender et
  al.~(FOCS'09) and of Fomin et al.~(FOCS'10), apply to optimization problems
  parameterized by \emph{solution size}. We present meta-kernelization
  theorems that use a \emph{structural parameters} of the input and not the
  solution size.  Let $\CCC$ be a graph class.  We define the \emph{$\CCC$\hy
    cover number} of a graph to be a the smallest number of modules the vertex
  set can be partitioned into such that each module induces a subgraph that
  belongs to the class $\CCC$.

  We show that each graph problem that can be expressed in Monadic
  Second Order (MSO) logic has a polynomial kernel with a linear
  number of vertices when parameterized by the $\CCC$\hy cover number
  for any fixed class $\CCC$ of bounded rank-width (or equivalently,
  of bounded clique-width, or bounded Boolean width).  Many graph
  problems such as \textsc{Independent Dominating Set},
  \textsc{$c$-Coloring}, and \textsc{$c$-Domatic Number} are covered
  by this meta-kernelization result.

  Our second result applies to MSO expressible optimization problems,
  such as \textsc{Minimum Vertex Cover}, \textsc{Minimum Dominating
    Set}, and \textsc{Maximum Clique}. We show that these problems
  admit a polynomial annotated kernel  with a linear
  number of vertices.
\end{abstract}

\section{Introduction}
Kernelization is an algorithmic technique that has become the subject
of a very active field in parameterized complexity, see, e.g., the
references in
\cite{Fellows06,GuoNiedermeier07,Rosamond10}. Kernelization can be
considered as a \emph{preprocessing with performance guarantee} that
reduces an instance of a parameterized problem in polynomial time to a
decision-equivalent instance, the \emph{kernel}, whose size is bounded
by a function of the parameter
alone~\cite{Fellows06,GuoNiedermeier07,Fomin10}; if the reduced
instance is an instance of a different problem, then it is called a
\emph{bikernel}.  Once a kernel or bikernel is obtained, the time
required to solve the original instance is bounded by a function of
the parameter and therefore independent of the input
size. Consequently one aims at (bi)kernels that are as small as
possible.

Every fixed-parameter tractable problem admits a kernel, but the size of the
kernel can have an exponential or even non-elementary dependence on the
parameter~\cite{FlumGrohe06}. Thus research on kernelization is typically
concerned with the question of whether a fixed-parameter tractable problem
under consideration admits a small, and in particular a \emph{polynomial},
kernel.  For instance, the parameterized \textsc{Minimum Vertex Cover} problem
(does a given graph have a vertex cover consisting of $k$ vertices?)  admits a
polynomial kernel containing at most $2k$ vertices. There are many
fixed-parameter tractable problems for which no polynomial kernels are known.
Recently, theoretical tools have been developed to provide strong theoretical
evidence that certain fixed-parameter tractable problems do not admit
polynomial kernels~\cite{BodlaenderDowneyFellowsHermelin09}. In particular,
these techniques can be applied to a wide range of graph problems
parameterized by treewidth and other width parameters such as clique-width, or
rank-width.  Thus, in order to get polynomial kernels, structural parameters
have been suggested that are somewhat weaker than treewidth, including the
vertex cover number, max-leaf number, and neighborhood
diversity~\cite{FellowsJansenRosamond13,Lampis12}.  The general aim is to find
a parameter that admits a polynomial kernel while being as general as
possible.

We extend this line of research by using results from modular
decompositions and rank-width to introduce new structural parameters for which
large classes of problems have polynomial kernels. Specifically, we study the
\emph{rank-width-$d$ cover number}, which is a special case of a
\emph{$\CCC$-cover number} (see Section~\ref{section:rwc} for definitions).
We establish the following result which is an important prerequisite for our
kernelization results.
\begin{restatable}{theorem}{thecomputecover}\label{the:compute-cover}
  For every constant $d$, a smallest rank-width-$d$ cover of a graph
  can be computed in polynomial time.
\end{restatable}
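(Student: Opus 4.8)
The plan is to reduce the problem to a bottom-up dynamic program over the modular decomposition tree $T$ of the input graph $G$: at each node of $T$ we compute the size of a smallest rank-width-$d$ cover of the subgraph hanging below it, and a routine traceback then outputs an actual smallest cover of $G$. We will use three facts. (i) $T$ can be computed in linear time; each internal node carries a label \emph{prime}, \emph{series}, or \emph{parallel}, no node has the same degenerate (series/parallel) label as its parent, and --- by the classical structure theorem of modular decomposition --- if $v$ is a node of $T$ with children $c_1,\dots,c_m$ and $G_v$ is the subgraph of $G$ induced by the leaves below $v$, then the modules of $G$ contained in $V(G_v)$ are exactly: all unions $\bigcup_{j\in S}V(G_{c_j})$ over non-empty $S\subseteq\{1,\dots,m\}$ when $v$ is \emph{series} or \emph{parallel}; and $V(G_v)$, each $V(G_{c_j})$, and the modules of $G$ lying strictly inside a single $V(G_{c_j})$, when $v$ is \emph{prime}. (ii) Rank-width is monotone under taking induced subgraphs, and if a graph is obtained as the disjoint union (resp.\ join) of graphs $H_1,\dots,H_t$ with $t\ge2$, then its rank-width equals $\max(\rho,\max_i\rw(H_i))$ with $\rho=0$ (resp.\ $\rho=1$), since the associated quotient is an edgeless (resp.\ complete) graph. (iii) For every fixed $d$ one can decide in polynomial time (in fact, in cubic time) whether a given graph has rank-width at most $d$.

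Because $V(G_v)$ is a module of $G$, a subset of $V(G_v)$ is a module of $G$ if and only if it is a module of $G_v$; hence, letting $f(v)$ be the size of a smallest partition of $V(G_v)$ into modules of $G$ each inducing a graph of rank-width at most $d$, the value $f(v)$ depends only on the subtree below $v$, and $f(r)$ at the root $r$ is exactly the rank-width-$d$ cover number of $G$. The recurrence is as follows. Using (iii) we test whether $\rw(G_v)\le d$; if so, then $V(G_v)$ is a feasible single part and $f(v)=1$ (this also covers the leaves, as $\rw(K_1)=0$). Otherwise $\rw(G_v)>d$. If $v$ is \emph{prime}, then by (i) the only module of $G_v$ meeting two children is the infeasible set $V(G_v)$, so every part of an optimal partition lies inside one $V(G_{c_j})$; since the modules of $G$ inside $V(G_{c_j})$ are exactly the modules of $G_{c_j}$, we get $f(v)=\sum_{j=1}^{m}f(c_j)$. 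If $v$ is \emph{series} or \emph{parallel}, set $A=\SB j\SM\rw(G_{c_j})\le d\SE$ and $B=\{1,\dots,m\}\setminus A$. By monotonicity, any module of $G$ containing some $V(G_{c_j})$ with $j\in B$ is infeasible, and by (i) no module of $G_v$ combines whole children with a proper part of another child; hence the vertices of each $V(G_{c_j})$, $j\in B$, are covered from within $V(G_{c_j})$ and contribute $f(c_j)$ independently of everything else. On the other hand, when $A\neq\emptyset$ the module $\bigcup_{j\in A}V(G_{c_j})$ has, by (ii), rank-width $\max(\rho,\max_{j\in A}\rw(G_{c_j}))$, which is $\le d$ as soon as $d\ge1$ (and also when $d=0$ if $v$ is \emph{parallel}), so all of these children may be covered by a single part, and by a straightforward exchange argument doing so is never worse than covering them separately. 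Thus $f(v)=[A\neq\emptyset]+\sum_{j\in B}f(c_j)$, where $[A\neq\emptyset]$ is $1$ if $A\neq\emptyset$ and $0$ otherwise. The only remaining case, $d=0$ with $v$ \emph{series}, behaves exactly like the prime case (no union of $\ge2$ children is feasible, since its quotient contains an edge), giving $f(v)=\sum_{j=1}^{m}f(c_j)$.

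Evaluating $f$ bottom-up yields the cover number $f(r)$, and recording a minimizing choice at each node yields a smallest cover itself. For the running time, $T$ has $O(n)$ nodes, at each of which we build a relevant induced subgraph and invoke the polynomial-time test of (iii); the bookkeeping of the recurrence is negligible; so the whole algorithm runs in polynomial time. The step requiring the most care is the justification of the recurrence --- in particular, using the structure theorem in (i) together with the monotonicity and the quotient bound in (ii) to show that at a degenerate node it is optimal to merge \emph{all} feasible children into a single part while splitting the infeasible children independently, and that at a prime node no part can straddle two children. The polynomial-time recognition of rank-width at most $d$ for fixed $d$ in (iii) is a known result which we use as a black box.
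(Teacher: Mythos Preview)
Your approach via dynamic programming on the modular decomposition tree is natural, but the recurrence at \emph{series} nodes is incorrect because claim (ii) fails for joins. It is not true that $\rw(H_1 * \cdots * H_t) = \max(1,\max_i \rw(H_i))$. Take $H_1=H_2=P_4$, so $\rw(H_i)=1$; but $P_4 * P_4$ is not distance\hy hereditary (the copy of $P_4$ sitting inside it is an induced path of length~$3$ between two vertices at distance~$2$ in the join), hence $\rw(P_4 * P_4)\ge 2$. Now run your recurrence with $d=1$: the root is a series node with two prime children, each of rank\hy width~$1$, so $A=\{1,2\}$, $B=\emptyset$, and you return $f(v)=1$. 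However, the only modules of $P_4 * P_4$ are the two copies of $P_4$, the eight singletons, and the whole vertex set; the last has rank\hy width $\ge 2$, so every feasible cover uses at least two parts and the correct value is $f(v)=2$. Thus the algorithm outputs a wrong answer. (Your treatment of \emph{parallel} nodes is fine, since disjoint union really does preserve rank\hy width; the ``quotient'' heuristic simply does not transfer to joins.)

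The paper avoids this pitfall by a different route. It defines $v\sim_d w$ iff some module containing both has rank\hy width at most~$d$, and proves the key lemma that if two \emph{intersecting} modules $M_1,M_2$ each have rank\hy width at most~$d$, then so does $M_1\cup M_2$; the proof is an explicit splice of rank\hy decompositions, not a quotient formula. From this, $\sim_d$ is an equivalence relation whose classes form the (unique) smallest cover, and one computes the classes by finding, for each pair $v,w$, the inclusion\hy minimal module containing both and testing its rank\hy width. Note that the paper's lemma says nothing about \emph{disjoint} modules at a series node; in particular it does not assert that all rank\hy width\hy$\le d$ children there can be merged --- and indeed the $P_4 * P_4$ example shows they sometimes cannot. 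Your DP could be repaired at series nodes by using exactly this overlap lemma to determine which groups of children are mergeable, but at that point the argument is no longer more elementary than the paper's.
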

Hence, for graph problems parameterized by rank-width-$d$ cover number,
we can always compute the parameter in polynomial time.  The proof of
Theorem~\ref{the:compute-cover} relies on a combinatorial property of
modules of bounded rank-width that amounts to a variant of
partitivity~\cite{CheinHabibMaurer81}.

Our kernelization results take the shape of \emph{algorithmic
  meta-theorems}, stated in terms of the evaluation of formulas of
monadic second order logic (MSO) on graphs. Monadic second order logic
over graphs extends first order logic by variables that may range over
sets of vertices (sometimes referred to as MSO${}_1$ logic).
Specifically, for an MSO formula $\varphi$, our first meta-theorem
applies to all problems of the following shape, which we simply call
\emph{MSO model checking} problems.
\newcommand{\MSOMC}[1]{\textsc{MSO-MC${}_{#1}$}}
\newcommand{\MSOOPT}[2]{\textsc{MSO-Opt${}^{#1}_{#2}$}}
\begin{quote}
  $\MSOMC{\varphi}$\\ \nopagebreak
  \emph{Instance}: A  graph $G$.\\ \nopagebreak
  \emph{Question}: Does $G \models \varphi$ hold?
\end{quote}
Many $\NP$\hy hard graph problems can be naturally expressed as MSO
model checking problems, for instance \textsc{Independent Dominating
  Set}, \textsc{$c$-Coloring}, and \textsc{$c$-Domatic Number}.
 
\begin{restatable}{theorem}{thmmsomc}\label{the:mso-mc}
  Let $\CCC$ be a graph class of bounded rank-width.
  Every MSO model checking problem, parameterized by the $\CCC$\hy
  cover number of the input graph, has a polynomial kernel with a
  linear number of vertices.
\end{restatable}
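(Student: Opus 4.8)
The plan is to compute a cover of $G$ by modules each inducing a subgraph of bounded rank-width, to replace the subgraph induced by every module by a constant-size gadget indistinguishable from it by MSO formulas of bounded quantifier rank, and then to reassemble the gadgets along the unchanged quotient graph; a composition theorem for MSO under substitution then guarantees that the resulting graph satisfies $\varphi$ exactly when $G$ does. For the first step, fix a constant $d$ so that every graph in $\CCC$ has rank-width at most $d$. Every $\CCC$-cover of $G$ is in particular a rank-width-$d$ cover, so the rank-width-$d$ cover number $k'$ of $G$ is at most its $\CCC$-cover number $k$; crucially, we need not recognize $\CCC$ at all, since by Theorem~\ref{the:compute-cover} we compute in polynomial time a partition $V(G)=M_1\cup\dots\cup M_{k'}$ into modules with $\rw(G[M_i])\le d$ for every $i$. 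Let $Q$ be the quotient graph on $\{1,\dots,k'\}$, with $ij\in E(Q)$ exactly when $M_i$ and $M_j$ are completely adjacent in $G$; since the $M_i$ form a modular partition, $G$ is precisely the graph obtained from $Q$ by substituting $G[M_i]$ for vertex $i$ for all $i$.

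\emph{Shrinking the blocks.} Let $q$ be the quantifier rank of the fixed formula $\varphi$. Up to logical equivalence there are only finitely many MSO formulas of quantifier rank at most $q$ over the edge signature, hence only finitely many MSO$_q$-types of graphs; for each such type realized by some graph of rank-width at most $d$ we fix a smallest realizing graph, and let $c=c(q,d)$ be the largest vertex count occurring among these finitely many fixed graphs --- a constant, and in fact a computable one. For each $i$, since $\rw(G[M_i])\le d$ we compute in polynomial time a rank-decomposition of bounded width, hence a clique-width expression for $G[M_i]$, and then evaluate the MSO$_q$-type $\tau_i$ of $G[M_i]$ bottom-up along this expression using the finite automaton underlying Courcelle's theorem. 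We replace $G[M_i]$ by the fixed smallest graph $H_i$ of rank-width at most $d$ with $H_i\equiv_q G[M_i]$, so that $|V(H_i)|\le c$.

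\emph{Reassembling and verifying.} Let $G'$ be the graph obtained from $Q$ by substituting $H_i$ for vertex $i$ for all $i$. By the Feferman--Vaught composition theorem for MSO, in its form for substitution of structures, the MSO$_q$-type of a graph obtained by substituting graphs into a fixed graph $Q$ is a function of $Q$ together with the tuple of MSO$_q$-types of the substituted graphs alone; since $G[M_i]\equiv_q H_i$ for every $i$, this yields $G\equiv_q G'$ and hence $G\models\varphi$ iff $G'\models\varphi$. Finally $|V(G')|=\sum_{i=1}^{k'}|V(H_i)|\le c\cdot k'\le c\cdot k=O(k)$, so $G'$ has a linear number of vertices (and at most $O(k^2)$ edges), the entire construction runs in polynomial time, and $G'$ is the desired kernel.

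\emph{Main obstacle.} The heart of the argument is the composition step: we need that MSO$_q$-equivalence of the induced subgraphs on the modules lifts to MSO$_q$-equivalence of the two whole graphs under substitution. For first-order quantifiers this is a routine Ehrenfeucht--Fra\"iss\'e argument in which Duplicator plays blockwise and the module property makes all cross-block adjacencies agree automatically; the set quantifiers of MSO$_1$ are the delicate part, handled by the type-based composition theorem. The companion ingredient is quantitative: only boundedly many MSO$_q$-types occur among graphs of bounded rank-width, and each has a representative of bounded size. Note that neither $G$ nor $G'$ need have bounded rank-width, since the quotient $Q$ is arbitrary --- all model-theoretic work is confined to the blocks $G[M_i]$, which do, and that is exactly where Theorem~\ref{the:compute-cover} and the automata machinery are used.
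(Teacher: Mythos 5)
Your proposal is correct and follows essentially the same route as the paper: compute a smallest rank-width-$d$ cover via Theorem~\ref{the:compute-cover}, replace each module by a constant-size graph with the same MSO rank-$q$ type, and reconnect the replacements along the unchanged adjacency pattern of the modules, with the crucial composition step being exactly the paper's Lemma~\ref{lem:partitiongame} (proved there by an explicit blockwise MSO game argument rather than by citing a Feferman--Vaught-style theorem). The remaining differences are cosmetic: the paper computes types by model-checking a fixed finite system of representative formulas (Lemma~\ref{lem:typeformula} with Theorem~\ref{thm:msorankwidth}) and obtains constant-size representatives by enumerating the class (Lemma~\ref{lem:constantrep}), whereas you use automata on clique-width expressions derived from the rank-decomposition, which changes nothing essential.
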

 
While MSO model checking problems already capture many important graph
problems, there are some well-known optimization problems on graphs
that cannot be captured in this way, such as \textsc{Minimum Vertex
  Cover}, \textsc{Minimum Dominating Set}, and \textsc{Maximum
  Clique}. Many such optimization graph problems can be stated in
the following way. Let $\varphi=\varphi(X)$ be an MSO formula with one free set
variable~$X$ and $\gleq \in \{\leq,\geq\}$.
\begin{quote}
  $\MSOOPT{\gleq}{\varphi}$\\
  \nopagebreak \emph{Instance}: A graph $G$ and an integer~$r\in
  \Nat$. \\ \nopagebreak \nopagebreak \emph{Question}: Is there a set
  $S \subseteq V(G)$ such that $G \models \varphi(S)$ and $\Card{S} \gleq
  r$?
\end{quote}
We call problems of this form \emph{MSO optimization problems}.  MSO
optimization problems form a large fragment of the so-called
\emph{LinEMSO} problems~\cite{ArnborgLagergrenSeese91}. There are dozens
of well-known graph problems that can be expressed as MSO optimization problems.

We establish the following result.
\begin{restatable}{theorem}{thmmsoopt}\label{the:mso-opt}
  Let $\CCC$ be a graph class of bounded rank-width.  Every MSO
  optimization problem, parameterized by the $\CCC$\hy cover number of
  the input graph, has a polynomial bikernel with a linear number of
  vertices.
\end{restatable}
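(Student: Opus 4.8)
The plan is to follow the kernelization underlying Theorem~\ref{the:mso-mc} and additionally carry along the information needed to recover the size of a solution; this bookkeeping is exactly what forces a \emph{bi}kernel instead of a kernel, the target being an annotated version of $\MSOOPT{\gleq}{\varphi}$ whose instances are a graph together with a vertex-multiplicity function given in binary (a vertex of multiplicity $m$ standing for up to $m$ interchangeable copies of itself, contributing between $0$ and $m$ to $\Card{S}$). First, using Theorem~\ref{the:compute-cover} together with the observation that a $\CCC$-cover is in particular a rank-width-$d$ cover whenever $\CCC$ has rank-width at most $d$ (so the rank-width-$d$ cover number is at most the $\CCC$-cover number), compute in polynomial time a partition $V(G)=M_1\cup\dots\cup M_{k'}$ into modules such that each $G[M_i]$ has rank-width at most $d$ and $k'$ is at most the parameter. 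Fix the quantifier rank $q$ of $\varphi$ and let $T$ be the finite, constant-size set of MSO $q$-types of vertex-coloured graphs of rank-width at most~$d$.

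The backbone is a composition property: since the $M_i$ are modules, $G$ is the substitution of the graphs $G[M_i]$ into the vertices of the quotient $H:=G/\{M_1,\dots,M_{k'}\}$, and a Feferman--Vaught-style argument for substitution (the ingredient already used for Theorem~\ref{the:mso-mc}) shows that, writing $S_i:=S\cap M_i$, whether $G\models\varphi(S)$ depends only on $H$ and on the tuple of $q$-types $\bigl(\mathrm{tp}_q(G[M_i],S_i)\bigr)_i$. Consequently $(G,r)$ is a yes-instance iff there are types $\tau_1,\dots,\tau_{k'}\in T$ and sizes $s_1,\dots,s_{k'}$ such that $(\tau_i)_i$ is accepting for $(H,\varphi)$, each pair $(\tau_i,s_i)$ is realizable inside $M_i$, and $\sum_i s_i \gleq r$. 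For $i\le k'$ and $\tau\in T$ set $R_i(\tau):=\SB \Card{S_i} \SM S_i\subseteq M_i,\ \mathrm{tp}_q(G[M_i],S_i)=\tau \SE$; the family $\bigl(R_i(\tau)\bigr)_{i,\tau}$ together with $H$ determines membership.

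The main technical step --- and where bounded rank-width is indispensable --- is to show that each $R_i(\tau)$ can be described by a constant number of integer parameters (computable in polynomial time), and, crucially, that there is a gadget $M_i^\star$ on a constant number of vertices carrying a multiplicity function whose substitution into any host reproduces exactly the same set of (type, realizable-size) pairs under the annotated semantics. I expect this to be the hard part. The argument mirrors the pumping behind Theorem~\ref{the:mso-mc}: a width-$d$ rank-decomposition of $G[M_i]$ exhibits only a constant number of distinct cut types, so a large $M_i$ contains two nested cuts of the same type; excising the block between them deletes a set of vertices whose contribution to every relevant type is absorbed by a single vertex of appropriate multiplicity, and the set of sizes this block can contribute turns out to be an interval (more generally a set of the bounded-complexity shape above), hence faithfully encodable by multiplicities rather than by an explicit list of allowed sizes. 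Iterating shrinks each module to constant size while altering realizable sizes only in a way tracked by the multiplicities; making the type algebra and the counting interact cleanly here is the delicate point.

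Finally, let $G^\star$ be the substitution of the gadgets $M_i^\star$ into $H$, equipped with the combined multiplicity function. Then $G^\star$ has $\sum_i \Card{V(M_i^\star)}=O(k')=O(k)$ vertices and polynomially many bits of annotation, and by the composition property together with correctness of the gadgets, $(G,r)$ and $(G^\star,r)$ are equivalent instances of $\MSOOPT{\gleq}{\varphi}$ and of its annotated version; hence $(G,r)\mapsto(G^\star,r)$ is the desired polynomial bikernel with a linear number of vertices. (One could replace the multiplicities by suitable independent sets or cliques to obtain a genuine kernel, but at the cost of losing the linear bound on the number of vertices; retaining them as annotations is precisely what keeps the vertex count linear in the parameter.)
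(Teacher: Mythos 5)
Your high-level plan (compute the cover, replace each module by a constant-size module of the same MSO $q$-type, and annotate to remember solution sizes) matches the paper's, but the step you yourself flag as ``the hard part'' is exactly where the proposal has a genuine gap, and the route you sketch for it is in fact shakier than you suggest. You want, for each module $M_i$ and type $\tau$, a constant-size description of the whole set $R_i(\tau)$ of realizable sizes, realized by a gadget with vertex multiplicities where a multiplicity-$m$ vertex ``contributes between $0$ and $m$'' to $\Card{S}$. Two problems: first, $R_i(\tau)$ need not be an interval -- take $G[M_i]$ to be a disjoint union of $n$ triangles (rank-width $1$); the $q$-type of a set $S_i$ that is a union of whole triangles forces any set of the same type to also be a union of whole triangles, so the realizable sizes within that type form an arithmetic progression of step $3$, which your ``between $0$ and $m$'' semantics cannot express. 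Second, the multiplicity semantics itself is unsound as stated: if a multiplicity-$m$ vertex stands for $m$ interchangeable twins, then choosing $0$, some, or all of the copies changes the MSO $q$-type of the blown-up module, so the formula cannot simply be evaluated on the small graph while the chosen count affects only the cardinality; making this sound is precisely the unresolved ``delicate point,'' and no pumping argument on rank-decompositions is actually given. The paper avoids all of this by observing that for $\MSOOPT{\leq}{\varphi}$ you never need the full set $R_i(\tau)$: for each subset $W'$ of the constant-size replacement module $U_i'$ it suffices to record the \emph{minimum} cardinality of a set $W^*\subseteq U_i$ with $\mathit{type}_q(G[U_i],W^*)=\mathit{type}_q(G'[U_i'],W')$, which is directly computable in polynomial time by the LinEMSO algorithm on bounded rank-width graphs (Theorem~\ref{thm:msorankwidth}, condition~\ref{msooptkernelc3}); the annotation triples $(W',U_i'\setminus W',\Card{W^*})$ with the semantics of $\WWW(Z)$ then charge exactly that minimum to each trace of a solution on $U_i'$ (Lemma~\ref{lem:smallannotated}), and correctness follows from the same congruence/game lemma you invoke.

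There is also a secondary gap in your size accounting: the multiplicities (or weights) are integers up to $\Card{V(G)}$ written in binary, so the reduced instance has encoding size $O(k\log\Card{V(G)})$, which is \emph{not} bounded by a function of the parameter $k$ alone; your claim of ``polynomially many bits of annotation'' does not address this. The paper closes this in Proposition~\ref{prop:msooptkernel} by a case distinction: if $2^k\leq n$ the annotated instance can be solved outright in polynomial time (enumerate the $2^{O(k)}\leq\mathrm{poly}(n)$ subsets of the small graph and model-check) and a trivial yes/no instance is output, while otherwise $\log n<k$ and the encoding is genuinely polynomial in $k$. Without this (or an equivalent) argument, what you construct is not a bikernel in the formal sense.
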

In fact, the obtained bikernel is an instance of an annotated variant of
the original MSO optimization problem~\cite{AbukhzamFernau06}. Hence,
Theorem~\ref{the:mso-opt} provides a polynomial kernel for an annotated
version of the original MSO optimization problem.

\longversion{
For obtaining the kernel for MSO model checking problems we proceed as
follows.  First we compute a smallest rank-width-$d$ cover of the
input graph $G$ in polynomial time. Second, we compute for each module
a small representative of constant size. Third, we replace each module
with a constant size module, which results in the kernel. For the MSO
optimization problems we proceed similarly. However, in order to
represent a possibly large module with a small module of constant
size, we need to keep the information how much a solution projected on
a module contributes to the full solution. We provide this information
by means of annotations to the kernel.}

We would like to point out that a class of graphs has bounded
rank-width iff it has bounded clique-width iff it has bounded
Boolean-width~\cite{BuixuanTelleVatshelle11}. Hence, we could have
equivalently stated the theorems in terms of clique-width or Boolean
width.
Furthermore we would like to point out that the theorems hold also for
some classes~$\CCC$ where we do not know whether $\CCC$ can be
recognized in polynomial time, and where we do not know how to compute
the partition in polynomial time.  For instance, the theorems hold if
$\CCC$ is a graph class of bounded clique-width (it is not known
whether graphs of clique-width at most $4$ can be recognized in
polynomial time).
\shortversion{
\paragraph{Note:} Due to space constraints, most proofs are placed in the
appendix. A full version of this paper is available on arxiv.org
(arXiv:1303.1786).
}
 
\section{Preliminaries}\label{section:prel}
The set of natural numbers (that is, positive integers) will be denoted by
$\Nat$. For $i \in \Nat$ we write $[i]$ to denote the set $\{1,
\dots, i \}$.
\paragraph{Graphs.} We will use standard graph theoretic terminology and notation
(cf. \cite{Diestel00}). A \emph{module} of a graph $G = (V,E)$ is a
nonempty set $X \subseteq V$ such that for each vertex $v \in V
\setminus X$ it holds that either no element of $X$ is a neighbor
of~$v$ or every element of $X$ is a neighbor of~$v$. We say two
modules $X, Y \subseteq V$ are \emph{adjacent} if there are vertices
$x \in X$ and $y \in Y$ such that $x$ and $y$ are adjacent. A
\emph{modular partition} of a graph $G$ is a partition $\{U_1, \dots,
U_k\}$ of its vertex set such that $U_i$ is a module of $G$ for each $i
\in [k]$.

\paragraph{Monadic Second-Order Logic on Graphs.}
We assume that we have an infinite supply of individual variables,
denoted by lowercase letters $x,y,z$, and an infinite supply of set
variables, denoted by uppercase letters $X,Y,Z$. \emph{Formulas} of
\emph{monadic second-order logic} (MSO) are constructed from atomic
formulas $E(x,y)$, $X(x)$, and $x = y$ using the connectives $\neg$
(negation), $\wedge$ (conjunction) and existential quantification
$\exists x$ over individual variables as well as existential
quantification $\exists X$ over set variables. Individual variables
range over vertices, and set variables range over sets of
vertices. The atomic formula $E(x,y)$ expresses adjacency, $x = y$
expresses equality, and $X(x)$ expresses that vertex $x$ in the set
$X$. From this, we define the semantics of monadic second-order logic
in the standard way (this logic is sometimes called $\MSO_1$).

\emph{Free and bound variables} of a formula are defined in the usual way. A
\emph{sentence} is a formula without free variables. We write $\varphi(X_1,
\dots, X_n)$ to indicate that the set of free variables of formula $\varphi$
is $\{X_1, \dots, X_n\}$. If $G = (V,E)$ is a graph and $S_1, \dots, S_n
\subseteq V$ we write $G \models \varphi(S_1, \dots, S_n)$ to denote that
$\varphi$ holds in $G$ if the variables $x_i$ are interpreted by the vertices
$v_i$ and the variables $X_j$ are interpreted by the sets $S_j$ ($i\in [n]$,
$j\in [m]$).

We review \MSO \emph{types} and \emph{games} roughly following the
presentation in \cite{Libkin04}. The \emph{quantifier rank} of an \MSO formula
$\varphi$ is defined as the nesting depth of quantifiers in $\varphi$. For
non\hy negative integers $q$ and $l$, let $\MSO_{q,l}$ consist of all \MSO
formulas of quantifier rank at most $q$ with free set variables in $\{X_1,
\dots, X_l\}$.

Let $\varphi = \varphi(X_1,\dots,X_l)$ and $\psi = \psi(X_1,\dots,X_l)$ be
\MSO formulas. We say $\varphi$ and $\psi$ are equivalent, written $\varphi
\equiv \psi$, if for all graphs $G$ and $U_1, \dots, U_l \subseteq V(G)$, $G
\models \varphi(U_1,\dots, U_l)$ if and only if $G \models \psi(U_1,\dots,
U_l)$.  Given a set $F$ of formulas, let ${F/\mathord\equiv}$ denote the set of
equivalence classes of $F$ with respect to $\equiv$. The following statement
has a straightforward proof using normal forms (see Theorem 7.5 in \cite{Libkin04} for details).
\begin{fact}\label{fact:representatives}
  Let $q$ and $l$ be non\hy negative integers. The set $\MSO_{q,l}/\mathord\equiv$ is
  finite, and given $q$ and $l$ one can effectively compute a system of
  representatives of $\MSO_{q,l}/\mathord\equiv$.
\end{fact}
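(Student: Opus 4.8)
The plan is to prove finiteness by induction on the quantifier rank $q$ and to read an effective construction off the same induction, i.e.\ the standard normal\hy form argument behind Theorem~7.5 of \cite{Libkin04}. A preliminary remark: for the statement to be literally correct one should, besides the free \emph{set} variables $X_1,\dots,X_l$, also confine the free \emph{individual} variables to a fixed finite set, since for instance the formulas $E(x_1,x_2)\wedge\dots\wedge E(x_{2n-1},x_{2n})$ are pairwise inequivalent and of quantifier rank $0$. I would therefore fix a finite set $W$ of variables of both sorts and prove, by induction on $q$, that the set of MSO formulas of quantifier rank at most $q$ with free variables among $W$ has only finitely many $\equiv$\hy classes, with a finite list of representatives computable from $q$ and $W$; instantiating $W=\{X_1,\dots,X_l\}$ (together with whatever fixed individual variables are implicitly permitted) then yields the claim.

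For $q=0$ every such formula is a Boolean combination of the atomic formulas $E(u,v)$, $u=v$, $X_i(u)$ with $u,v\in W$, of which there are only finitely many, say $a$; a Boolean combination of $a$ fixed formulas is determined up to $\equiv$ by its truth table, so there are at most $2^{2^a}$ classes, one representative of each --- e.g.\ a disjunctive normal form over the $a$ atoms --- being written down mechanically. For the inductive step, every formula of quantifier rank at most $q$ is a Boolean combination of atoms over $W$ and of formulas $\exists v\,\psi$ where, after $\alpha$\hy renaming the bound variable, $v$ may be taken to be one fixed individual variable $x^{*}$ or one fixed set variable $X^{*}$, and $\psi$ has quantifier rank at most $q-1$ with free variables among $W\cup\{v\}$. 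By the induction hypothesis (applied to the finite set $W\cup\{x^{*}\}$, resp.\ $W\cup\{X^{*}\}$) there are only finitely many $\equiv$\hy classes of such $\psi$; since $\psi\equiv\psi'$ implies $\exists v\,\psi\equiv\exists v\,\psi'$, there are only finitely many formulas $\exists v\,\psi$ up to $\equiv$; hence $\MSO_{q,l}$ is, up to $\equiv$, the Boolean closure of a fixed finite set of formulas, and therefore finite. Carrying representatives through this argument --- take the representatives of the $\psi$'s supplied by the induction hypothesis, prepend a quantifier, adjoin the atoms, and form all Boolean combinations --- produces a finite, effectively computable list $L$ such that every formula of $\MSO_{q,l}$ is equivalent to some member of $L$.

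The one point that needs real care, and which I expect to be the main obstacle, is turning this complete list $L$ into an \emph{exact} system of representatives, keeping precisely one formula per class. It is enough to know that $\equiv$ is decidable on $\MSO_{q,l}$, since one then scans $L$ and discards every formula equivalent to one already kept. Decidability is a standard consequence of the Ehrenfeucht--Fra\"{\i}ss\'e (Ajtai--Fagin) characterisation of MSO: whether a formula of quantifier rank at most $q$ holds in a graph with $l$ distinguished vertex sets depends only on the $q$\hy round MSO\hy game equivalence class of that structure, there are only finitely many such classes, and the inductive construction of those classes simultaneously produces, for each of them, a witnessing graph on at most $g(q,l)$ vertices for some computable $g$; hence $\varphi\equiv\psi$ for $\varphi,\psi\in\MSO_{q,l}$ iff $\varphi$ and $\psi$ agree on all graphs with at most $g(q,l)$ vertices under all assignments to $X_1,\dots,X_l$, which is a finite check. (The same construction, phrased via Hintikka formulas of rank $q$ and their disjunctions, yields the system of representatives directly; but the bookkeeping that restricts to \emph{realisable} types, so that distinct Hintikka formulas stay genuinely inequivalent, is exactly the same difficulty.) I would also note that the kernelization results of this paper only use a finite list meeting every $\equiv$\hy class, so $L$ already suffices there and this last refinement is not strictly needed.
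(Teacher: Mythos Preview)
The paper does not supply its own proof; it just points to Theorem~7.5 of \cite{Libkin04}, which is exactly the normal\hy form argument you reproduce by induction on $q$. Your construction of a finite, effectively computable list $L$ meeting every $\equiv$\hy class is correct and is the intended content of the Fact, and your remark about bounding the free individual variables is a legitimate clarification of the statement.

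Where your proposal goes wrong is in the final step, where you try to thin $L$ down to an \emph{exact} transversal by deciding $\equiv$. Your claim that each realised rank\hy $q$ MSO type over graphs is witnessed by a graph on at most $g(q,l)$ vertices for some \emph{computable} $g$ is false. If such a computable bound existed, one could decide satisfiability of an arbitrary MSO sentence $\varphi$ of quantifier rank $q$ by exhaustively checking all graphs on at most $g(q,0)$ vertices; but already first\hy order satisfiability over finite graphs is undecidable by Trakhtenbrot's theorem, and MSO is no easier. The Ehrenfeucht--Fra\"{\i}ss\'e analysis does guarantee that such a function $g$ \emph{exists}---there are only finitely many types and each realised one has some finite witness---but it cannot be computable. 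Fortunately, as you yourself note in your last sentence, the paper's applications (in particular the proof of Lemma~\ref{lem:typeformula}) only require a finite list meeting every class, so the Fact should be read in that weaker sense and your inductive construction of $L$ already delivers everything that is needed.
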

We will assume that for any pair of non\hy negative integers $q$ and $l$ the
system of representatives of $\MSO_{q,l}/\mathord\equiv$ given by
Fact~\ref{fact:representatives} is fixed.
\begin{definition}[\MSO Type]
  Let $q,l$ be a non\hy negative integers. For a graph $G$ and an $l$\hy
  tuple $\vec{U}$ of sets of vertices of $G$, we define
  $\mathit{type}_q(G,\vec{U})$ as the set of formulas $\varphi \in
  \MSO_{q,l}$ such that $G \models \varphi(\vec{U})$. We call
  $\mathit{type}_q(G,\vec{U})$ the \MSO \emph{rank-$q$ type of
    $\vec{U}$ in $G$}. 
\end{definition}
It follows from Fact~\ref{fact:representatives} that up to logical
equivalence, every type contains only finitely many formulas. This
allows us to represent types using \MSO formulas as follows.

\begin{restatable}{lemma}{lemtypeformula}\label{lem:typeformula}
  Let $q$ and $l$ be non\hy negative integer constants, let $G$ be a graph,
  and let $\vec{U}$ be an $l$\hy tuple of sets of vertices of $G$. One can
  effectively compute a formula $\Phi \in \MSO_{q,l}$ such that for any graph
  $G'$ and any $l$\hy tuple $\vec{U}'$ of sets of vertices of $G'$ we have $G'
  \models \Phi(\vec{U}')$ if and only if $\mathit{type}_q(G,\vec{U}) =
  \mathit{type}_q(G',\vec{U}')$. Moreover, if $G \models \varphi(\vec{U})$ can
  be decided in polynomial time for any fixed $\varphi \in \MSO_{q,l}$ then
  $\Phi$ can be computed in time polynomial in $\Card{V(G)}$.
\end{restatable}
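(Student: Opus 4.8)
The plan is to leverage Fact~\ref{fact:representatives} to reduce the problem to finitely many computations. First I would fix the system of representatives of $\MSO_{q,l}/\mathord\equiv$ guaranteed by Fact~\ref{fact:representatives}; call it $\{\varphi_1, \dots, \varphi_m\}$, where $m = m(q,l)$ is a constant since $q$ and $l$ are constants. The key observation is that $\mathit{type}_q(G,\vec U)$ is, up to logical equivalence, completely determined by the subset $T \subseteq [m]$ of indices $i$ for which $G \models \varphi_i(\vec U)$. Indeed, every formula in $\MSO_{q,l}$ is equivalent to some $\varphi_i$, so membership of an arbitrary $\psi \in \MSO_{q,l}$ in the type is determined by whether the representative equivalent to $\psi$ lies in the type.

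Next I would construct $\Phi$ explicitly as
\[
  \Phi \;:=\; \bigwedge_{i \in T} \varphi_i \;\wedge\; \bigwedge_{i \in [m]\setminus T} \neg\varphi_i,
\]
where $T = \SB i \in [m] \SM G \models \varphi_i(\vec U) \SE$. Since each $\varphi_i$ has quantifier rank at most $q$ and free variables among $X_1, \dots, X_l$, negations and conjunctions do not increase quantifier rank, so $\Phi \in \MSO_{q,l}$ (after possibly passing to the representative of its equivalence class, if one insists $\Phi$ be literally a representative; but the statement only requires $\Phi \in \MSO_{q,l}$, so the displayed formula suffices). I then verify the correctness claim: for any $G'$ and $\vec U'$, we have $G' \models \Phi(\vec U')$ iff $G'$ satisfies exactly those $\varphi_i$ with $i \in T$, which by the observation above holds iff $\mathit{type}_q(G',\vec U')$ consists (up to equivalence) of exactly the representatives indexed by $T$, i.e.\ iff $\mathit{type}_q(G',\vec U') = \mathit{type}_q(G,\vec U)$.

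For the effectiveness statements: computing the representatives $\varphi_1,\dots,\varphi_m$ is effective by Fact~\ref{fact:representatives}; determining $T$ requires deciding $G \models \varphi_i(\vec U)$ for each of the constantly many $i$, which is possible since MSO model checking is decidable; assembling $\Phi$ from $T$ is then trivial. For the polynomial-time refinement, observe that the only non-trivial work is computing $T$, which amounts to $m$ model-checking queries $G \models \varphi_i(\vec U)$; by hypothesis each such query takes time polynomial in $\Card{V(G)}$ for fixed $\varphi_i$, and $m$ is a constant, so the total time is polynomial in $\Card{V(G)}$.

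The only point requiring any care — rather than an outright obstacle — is the passage from "the type is a set of formulas" to "the type is determined by a constant-size bit vector." This rests entirely on Fact~\ref{fact:representatives}: without the finiteness of $\MSO_{q,l}/\mathord\equiv$ one could not encode a type by a finite formula at all. Everything else is bookkeeping. I would also note in passing that $\Phi$ depends on $G$ and $\vec U$ only through the type, so the construction is well-defined on equivalence classes, which is exactly what makes it a legitimate "representation of types by formulas."
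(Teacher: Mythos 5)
Your construction of $\Phi$ as the conjunction of the satisfied representatives and the negations of the unsatisfied ones, justified via Fact~\ref{fact:representatives} and with the polynomial-time bound coming from constantly many model-checking queries, is exactly the paper's own proof. The argument is correct and essentially identical, differing only in that the paper spells out the converse direction by contradiction.
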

\newcommand{\pflemtypeformula}[0]{\begin{proof}
  Let $R$ be a system of representatives of $\MSO_{q,l}/\mathord\equiv$ given
  by Fact~\ref{fact:representatives}. Because $q$ and $l$ are constant, we can
  consider both the cardinality of $R$ and the time required to compute it as
  constants. Let $\Phi \in \MSO_{q,l}$ be the formula defined as $\Phi =
  \bigwedge_{\varphi \in S} \varphi \wedge \bigwedge_{\varphi \in R \setminus
    S} \neg \varphi$, where $S = \SB \varphi \in R \SM G \models
  \varphi(\vec{U}) \SE$. We can compute $\Phi$ by deciding $G \models
  \varphi(\vec{U})$ for each $\varphi \in R$. Since the number of formulas in
  $R$ is a constant, this can be done in polynomial time if $G \models
  \varphi(\vec{U})$ can be decided in polynomial time for any fixed $\varphi
  \in \MSO_{q,l}$.

  Let $G'$ be an arbitrary graph and $\vec{U}'$ an $l$\hy tuple of subsets of
  $V(G')$. We claim that $\mathit{type}_q(G, \vec{U}) = \mathit{type}_q(G',
  \vec{U'})$ if and only if $G' \models \Phi(\vec{U}')$. Since $\Phi \in
  \MSO_{q,l}$ the forward direction is trivial. For the converse, assume
  $\mathit{type}_q(G, \vec{U}) \neq \mathit{type}_q(G', \vec{U'})$. First
  suppose $\varphi \in \mathit{type}_q(G, \vec{U}) \setminus
  \mathit{type}_q(G', \vec{U'})$. The set $R$ is a system of representatives
  of $\MSO_{q,l}/\mathord\equiv$ , so there has to be a $\psi \in R$ such that
  $\psi \equiv \varphi$. But $G' \models \Phi(\vec{U}')$ implies $G' \models
  \psi(\vec{U}')$ by construction of $\Phi$ and thus $G' \models
  \varphi(\vec{U}')$, a contradiction. Now suppose $\varphi \in
  \mathit{type}_q(G', \vec{U}') \setminus \mathit{type}_q(G, \vec{U})$. An
  analogous argument proves that there has to be a $\psi \in R$ such that
  $\psi \equiv \varphi$ and $G' \models \neg \psi(\vec{U}')$. It follows that
  $G' \not \models \varphi(\vec{U}')$, which again yields a contradiction.
\end{proof}}
\longversion{\pflemtypeformula}

\begin{definition}[Partial isomorphism]\label{def:partialisomorphism}
  Let $G, G'$ be graphs, and let $\vec{V} = (V_1, \dots, V_l)$ and
  $\vec{U} = (U_1, \dots, U_l)$ be tuples of sets of vertices with
  $V_i \subseteq V(G)$ and $U_i \subseteq V(G')$ for each $i \in
  [l]$. Let $\vec{v} = (v_1, \dots, v_m)$ and $\vec{u} = (u_1, \dots,
  u_m)$ be tuples of vertices with $v_i \in V(G)$ and $u_i \in V(G')$
  for each $i \in [m]$. Then $(\vec{v}, \vec{u})$ defines a
  \emph{partial isomorphism between $(G, \vec{V})$ and $(G',
  \vec{U})$} if the following conditions hold:
  \begin{itemize}
    \item For every $i,j \in [m]$,
    \begin{align*}
      v_i = v_j \: \Leftrightarrow \: u_i = u_j \text{ and }
      v_iv_j \in E(G)\: \Leftrightarrow \: u_iu_j \in E(G').
    \end{align*}
    \item For every $i \in [m]$ and $j \in [l]$,
      \begin{align*}
        v_i \in V_j \: \Leftrightarrow u_i \in U_j.
      \end{align*}
    \end{itemize}
\end{definition}

\begin{definition}
  Let $G$ and $G'$ be graphs, and let $\vec{V_0}$ be a $k$\hy tuple of subsets
  of $V(G)$ and let $\vec{U_0}$ be a $k$\hy tuple of subsets of $V(G')$. Let
  $q$ be a non\hy negative integer. The \emph{$q$\hy round \MSO game on $G$
    and $G'$ starting from $(\vec{V_0}, \vec{U_0})$} is played as follows.
  The game proceeds in rounds, and each round consists of one of the following
  kinds of moves.
\begin{itemize}
  \item \textbf{Point move} The spoiler picks a vertex in either $G$ or $G'$; the duplicator responds by picking a vertex in the other graph.
  \item \textbf{Set move} The spoiler picks a subset of $V(G)$ or a
    subset of $V(G')$; the duplicator responds with a subset of the
    vertex set of the other graph.
  \end{itemize}
  Let $v_1,\dots,v_m \in V(G)$ and $u_1,\dots,u_m \in V(G')$ be the point
  moves played in the $q$-round game, and let $V_1, \dots, V_l \subseteq
  V(G')$ and $U_1,\dots, U_l \subseteq V(G)$ be the set moves played in the
  $q$\hy round game, so that $l + m = q$ and moves belonging to same round
  have the same index. Then the duplicator wins the game if $(\vec{v},
  \vec{u})$ is a partial isomorphism of $(G, \vec{V_0}\vec{V})$ and $(G',
  \vec{U_0}\vec{U})$. If duplicator has a winning strategy, we write $(G,
  \vec{V_0}) \equiv^{\MSO}_q (G', \vec{U_0})$.
\end{definition}

\begin{theorem}[\cite{Libkin04}, Theorem 7.7]\label{thm:msogames} Given two graphs $G$ and $G'$ and two $l$\hy tuples $\vec{V_0}, \vec{U_0}$ of sets of vertices of $G$ and $G'$, we have \begin{align*}
    \mathit{type}_q(G, \vec{V_0}) = \mathit{type}_q(G, \vec{U_0}) \:
    \Leftrightarrow \: (G, \vec{V_0}) \equiv^{\MSO}_q (G', \vec{U_0}).
\end{align*}

\end{theorem}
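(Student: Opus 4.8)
The plan is to establish the standard \MSO Ehrenfeucht--Fra\"iss\'e correspondence by induction on the number of rounds $q$, after strengthening the statement so that it also tracks the vertex parameters that point moves introduce. Concretely I would prove: for all graphs $G,G'$, all tuples $\vec V,\vec U$ of vertex sets and all tuples $\vec v,\vec u$ of vertices of matching lengths, $\mathit{type}_q(G,\vec V,\vec v)=\mathit{type}_q(G',\vec U,\vec u)$ if and only if the duplicator wins the $q$\hy round \MSO game on $G$ and $G'$ that starts from $(\vec V,\vec U)$ with point moves $\vec v,\vec u$ already in place. Here $\mathit{type}_q$ denotes the set of \MSO formulas of quantifier rank at most $q$ over the corresponding free set and individual variables that hold under the given interpretation; Fact~\ref{fact:representatives} applies verbatim to this richer signature, so every such type is still pinned down by finitely many formulas up to $\equiv$. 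The theorem is the case in which $\vec v$ and $\vec u$ are empty. The base case $q=0$ is immediate: with no moves the duplicator wins exactly when $(\vec v,\vec u)$ is a partial isomorphism of $(G,\vec V)$ and $(G',\vec U)$, and a quantifier\hy free \MSO formula over the available variables is a Boolean combination of atomic formulas $E(\cdot,\cdot)$, $\cdot=\cdot$, $X_i(\cdot)$, which agree on the two sides under precisely that same condition.

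For the inductive step, assume the claim for $q-1$. To show ``equal types $\Rightarrow$ duplicator wins'', I would prescribe the duplicator's reply to the spoiler's first move and then invoke the inductive hypothesis for the remaining $q-1$ rounds. By symmetry --- both type equality and the winning condition are symmetric in the two sides --- assume the spoiler moves in $G$. If it is a point move $v$, let $\Theta$ be the characteristic formula of $\mathit{type}_{q-1}(G,\vec V,\vec v v)$: an \MSO formula of quantifier rank at most $q-1$ obtained, as in the proof of Lemma~\ref{lem:typeformula}, by conjoining the finitely many representatives belonging to that type with the negations of the remaining representatives. Then $\exists x\,\Theta$ has quantifier rank at most $q$ and holds in $(G,\vec V,\vec v)$, hence lies in the common type, hence holds in $(G',\vec U,\vec u)$, which yields a witness $u\in V(G')$ with $\mathit{type}_{q-1}(G,\vec V,\vec v v)=\mathit{type}_{q-1}(G',\vec U,\vec u u)$; the duplicator answers $u$ and the inductive hypothesis supplies a winning continuation. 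A set move is handled identically, with an existential quantifier over a fresh set variable in place of $\exists x$.

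For ``duplicator wins $\Rightarrow$ equal types'', suppose towards a contradiction that some \MSO formula $\varphi$ of quantifier rank at most $q$ separates $(G,\vec V,\vec v)$ from $(G',\vec U,\vec u)$. Since $\varphi$ is a Boolean combination of atomic formulas and of maximal quantified subformulas of the form $\exists x\,\psi$ or $\exists X\,\psi$ with $\psi$ of quantifier rank at most $q-1$, one of those constituents must already separate the two sides. An atomic separator is impossible, because the duplicator winning the game forces $(\vec v,\vec u)$ to be a partial isomorphism (exactly the $q=0$ analysis). By symmetry we may therefore assume the separator is $\exists x\,\psi$ and that it is true on the $G$ side; let $v'\in V(G)$ witness it, have the spoiler play the point move $v'$, and let $u'$ be the duplicator's winning reply. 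By the inductive hypothesis $\mathit{type}_{q-1}(G,\vec V,\vec v v')=\mathit{type}_{q-1}(G',\vec U,\vec u u')$, so $\psi$ also holds on the $G'$ side, hence so does $\exists x\,\psi$ --- a contradiction. The $\exists X$ case is the same using a set move.

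This is the classical \MSO Ehrenfeucht--Fra\"iss\'e theorem (cf.\ \cite{Libkin04}), so I do not anticipate a real obstacle. The two points that require care are choosing the strengthened induction hypothesis so that set parameters and vertex parameters are carried together --- which lets a single argument cover both set moves and point moves --- and using the finiteness of types from Fact~\ref{fact:representatives} to compress a type into the single characteristic formula that drives the ``equal types $\Rightarrow$ duplicator wins'' direction.
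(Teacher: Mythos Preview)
The paper does not prove this theorem at all: it is stated as a citation to \cite{Libkin04}, Theorem~7.7, and used as a black box. Your proposal is a correct and complete reconstruction of the classical \MSO Ehrenfeucht--Fra\"iss\'e argument --- the strengthened induction hypothesis carrying both set and point parameters, the characteristic-formula trick (via Fact~\ref{fact:representatives}) for the ``equal types $\Rightarrow$ duplicator wins'' direction, and the decomposition into maximal quantified subformulas for the converse are exactly the standard ingredients --- so there is nothing to compare against and nothing to correct.
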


\paragraph{Fixed-Parameter Tractability and Kernels.}

\newcommand{\PP}{P}
\newcommand{\QQ}{Q}

A \emph{parameterized problem} $\PP$ is a subset of $\Sigma^* \times
\Nat$ for some finite alphabet $\Sigma$. For a problem instance $(x,k)
\in \Sigma^* \times \Nat$ we call $x$ the main part and $k$ the
parameter.  A parameterized problem $\PP$ is \emph{fixed-parameter
  tractable} (FPT) if a given instance $(x, k)$ can be solved in time
$O(f(k) \cdot p(\Card{x}))$ where $f$ is an arbitrary computable
function of $k$ and $p$ is a polynomial in the input size $\Card{x}$.

A \emph{bikernelization} for a parameterized problem $\PP \subseteq
\Sigma^* \times \Nat$ into a parameterized problem $\QQ \subseteq
\Sigma^* \times \Nat$ is an algorithm that, given $(x, k) \in \Sigma^*
\times \Nat$, outputs in time polynomial in $\Card{x}+k$ a pair $(x',
k') \in \Sigma^* \times \Nat$ such that (i)~$(x,k) \in \PP$ if and
only if $(x',k') \in \QQ$ and (ii)~$\Card{x'}+k′'\leq g(k)$, where $g$
is an arbitrary computable function.  The reduced instance $(x',k')$
is the \emph{bikernel}.  If $\PP=\QQ$, the reduction is called a
\emph{kernelization} and $(x',k')$ a \emph{kernel}.  The function $g$
is called the \emph{size} of the (bi)kernel, and if $g$ is a
polynomial then we say that $\PP$ admits a \emph{polynomial
  (bi)kernel}.
  
It is well known that every fixed-parameter tractable problem admits a generic
kernel, but the size of this kernel can have an exponential or even
non-elementary dependence on the parameter~\cite{DowneyFellowsStege99}. Since
recently there have been workable tools available for providing strong
theoretical evidence that certain parameterized problems do not admit a
polynomial kernel
\cite{BodlaenderDowneyFellowsHermelin09,MisraRamanSaurabh11}.

\paragraph{Rank-width}

The graph invariant rank-width was introduced by Oum and Seymour~\cite{OumSeymour06} with the original intent of investigating the
graph invariant clique-width. It later turned out that rank-width
itself is a useful parameter, with several advantages over
clique-width. 

\shortversion{
For a graph $G$ and $U,W\subseteq V(G)$, let $\mx A_G[U,W]$ denote the
$U\times W$-submatrix of the adjacency matrix over the two-element
field $\mathrm{GF}(2)$, i.e., the entry $a_{u,w}$, $u\in U$ and $w\in
W$, of $\mx A_G[U,W]$ is $1$ if and only if $\{u,w\}$ is an edge
of~$G$.  The {\em cut-rank} function $\rho_G$ of a graph $G$ is
defined as follows: For a bipartition $(U,W)$ of the vertex
set~$V(G)$, $\rho_G(U)=\rho_G(W)$ equals the rank of $\mx A_G[U,W]$
over $\mathrm{GF}(2)$. 

A \emph{rank-decomposition} of a graph $G$ is a pair $(T,\mu)$
where $T$ is a tree of maximum degree 3 and $\mu:V(G)\rightarrow \{t:
\text{$t$ is a leaf of $T$}\}$ is a bijective function. For an edge
$e$ of $T$, the connected components of $T\setminus e$ induce a
bipartition $(X,Y)$ of the set of leaves of $T$.  The \emph{width} of
an edge $e$ of a rank-decomposition $(T,\mu)$ is $\rho_G(\mu^{-1} (X))$.
The \emph{width} of $(T,\mu)$ is the maximum width over all edges of
$T$.  The \emph{rank-width} of $G$ is the minimum width over all
rank-decompositions of $G$.  
}

\longversion{
A set function $f:2^M\rightarrow \mathbb{Z}$ is called
\emph{symmetric} if $f(X)=f(M\setminus X)$ for all $X\subseteq M$.
For a symmetric function $f:2^M\rightarrow \mathbb{Z}$ on a finite
set~$M$, a \emph{branch-decomposition} of $f$ is a pair $(T,\mu)$
where $T$ tree of maximum degree 3 and $\mu:M\rightarrow \{t:
\text{$t$ is a leaf of $T$}\}$ is a bijective function.  For an edge
$e$ of $T$, the connected components of $T\setminus e$ induce a
bipartition $(X,Y)$ of the set of leaves of $T$.  The \emph{width} of
an edge $e$ of a branch-decomposition $(T,\mu)$ is $f(\mu^{-1} (X))$.
The \emph{width} of $(T,\mu)$ is the maximum width over all edges of
$T$.  The \emph{branch-width} of $f$ is the minimum width over all
branch-decompositions of $f$.  If $|M|\le 1$, then we define the
branch-width of $f$ as $f(\emptyset)$.  A natural application of this
definition is the branch-width of a graph, as introduced by Robertson
and Seymour~\cite{RobertsonSeymour91}, where $M=E(G)$, and $f$ the
connectivity function of~$G$.
 
There is, however, another interesting application of the
aforementioned general notions, in which we consider the vertex set 
$V(G)=M$ of a graph $G$ as the ground set.

For a graph $G$ and $U,W\subseteq V(G)$, let $\mx A_G[U,W]$ denote the
$U\times W$-submatrix of the adjacency matrix over the two-element
field $\mathrm{GF}(2)$, i.e., the entry $a_{u,w}$, $u\in U$ and $w\in
W$, of $\mx A_G[U,W]$ is $1$ if and only if $\{u,w\}$ is an edge
of~$G$.  The {\em cut-rank} function $\rho_G$ of a graph $G$ is
defined as follows: For a bipartition $(U,W)$ of the vertex
set~$V(G)$, $\rho_G(U)=\rho_G(W)$ equals the rank of $\mx A_G[U,W]$
over $\mathrm{GF}(2)$.  A \emph{rank-decomposition} and
\emph{rank-width} of a graph $G$ is the branch-decomposition and
branch-width of the cut-rank function
$\rho_G$ of~$G$ on $M=V(G)$, respectively.
}

\begin{theorem}[\cite{HlinenyOum08}]\label{thmrankdecomp} Let $k \in \Nat$ be a constant and
  $n \geq 2$. For an $n$-vertex graph $G$, we can output a
  rank-decomposition of width at most $k$ or confirm that the
  rank-width of $G$ is larger than $k$ in time $O(n^3)$.
\end{theorem}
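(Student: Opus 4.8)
Since this is a known result, I only sketch the strategy; it combines a fast approximation of rank-width with the fixed-parameter algorithm for branch-width of represented matroids. Recall that the rank-width of $G$ is the branch-width of the cut-rank function $\rho_G$ on the ground set $V(G)$, and that $\rho_G$ is symmetric and submodular. The first step is approximation: applying the branch-width approximation of \cite{OumSeymour06} to $\rho_G$ --- in a form that, by a refinement due to Oum, runs in time $O(n^3)$ for the rank-width case --- we either obtain a rank-decomposition of $G$ of width at most $f(k)$ for an explicit function $f$ (one may take $f(k)=3k+1$), or a certificate that $\rw(G)>k$. In the latter case we output the negative answer and stop; so assume from now on that we hold a rank-decomposition of width at most $f(k)$.

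The second step refines this approximate decomposition into an exact one. The cut-rank function of $G$ coincides, up to an additive constant that can be tracked explicitly, with the connectivity function of the binary matroid $M_G$ given by the representation $[\,I \mid A_G\,]$ over $\mathrm{GF}(2)$; hence a width-$f(k)$ rank-decomposition yields, in linear time, a branch-decomposition of $M_G$ of width at most $f(k)+O(1)$. One then invokes Hlineny's parametrized algorithm for matroid branch-width: given a $\mathrm{GF}(2)$-represented matroid together with a branch-decomposition of bounded width, it decides in time $O(n^3)$ whether the branch-width is at most a prescribed constant and, if so, constructs an optimal branch-decomposition. The mechanism is that such a matroid, equipped with a bounded-width decomposition, can be encoded as a labelled tree --- a parse term over a finite signature determined by $k$ and the field --- on which the property ``branch-width $\le k$'', being $\MSO$-expressible over the matroid, is recognised by a finite tree automaton: a bottom-up run decides the question, and a top-down pass reads off a witnessing decomposition. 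Translating the resulting width-$\le k$ branch-decomposition of $M_G$ back to $G$ (again absorbing the additive constant) gives the desired rank-decomposition of width at most $k$.

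The crux is this second step, and two points within it deserve care. First, the decision machinery natively returns only a yes/no answer, so to actually output a decomposition one augments the dynamic programming to record, at each node of the parse term, the attainable ``boundary behaviours'', and then reconstructs an optimal split in a top-down pass; this is exactly where the $O(n^3)$ bound --- rather than a larger polynomial --- must be argued, and it rests on the automaton and table sizes depending only on $k$ and the field (hence being constant) and on never re-running the cubic approximation. Second, one must guarantee that the two translations between $\rw(G)$ and the branch-width of $M_G$ preserve the width bound exactly: the safe route is to keep track of the precise relationship (an equality or a fixed additive shift), adjust the constant passed to Hlineny's algorithm accordingly, and verify at the end that the reconstructed rank-decomposition indeed has width at most $k$. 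An alternative to the detour through matroids is to run a Courcelle-style automaton directly on clique-width expressions over $f(k)$ labels, at the cost of translating between rank-width and clique-width.
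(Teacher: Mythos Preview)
The paper does not prove this theorem: it is quoted from \cite{HlinenyOum08} and used as a black box, with no argument supplied. There is therefore nothing in the paper to compare your proposal against. For what it is worth, your sketch is a fair outline of the strategy actually used by Hlin\v{e}n\'y and Oum --- a cubic-time approximation of rank-width, a translation to branch-width of a $\mathrm{GF}(2)$-represented matroid via the relationship between the cut-rank and the matroid connectivity function, and then Hlin\v{e}n\'y's automaton-based exact algorithm for represented matroids of bounded branch-width --- so your proposal is consistent with the cited source, though it goes well beyond anything the present paper itself provides.
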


\begin{theorem}[\cite{GanianHlineny10}]\label{thm:msorankwidth}
  Let $d \in \Nat$ be a constant and let $\varphi$ and $\psi=\psi(X)$ be fixed
  \MSO formulas. Given a graph $G$ with $\rw(G) \leq d$, one can decide
  whether $G \models \varphi$ in polynomial time. Moreover, a set $S \subseteq
  V(G)$ of minimum (maximum) cardinality such that $G \models \psi(S)$ can be
  found in polynomial time, if one exists.
\end{theorem}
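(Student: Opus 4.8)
\medskip
\noindent\emph{Proof proposal.}
The plan is to reduce both assertions to Courcelle-style dynamic programming over a bounded-width parse tree of $G$, using the \MSO-game machinery above as the composition engine. First I would apply Theorem~\ref{thmrankdecomp} to compute in time $O(n^3)$ a rank-decomposition of $G$ of width at most $d$ (one exists since $\rw(G)\le d$), and then convert it into a rooted binary \emph{parse tree} $\TT$ of size $O(n)$ over a finite algebra: the leaves are single labelled vertices of $G$, the internal nodes carry operations (disjoint union, relabelling, and edge additions between label classes), and the root evaluates to $G$ using at most $k=k(d)$ labels. This conversion runs in polynomial time and keeps the number of labels bounded by a constant depending only on $d$; one may for instance route through a clique-width expression, which a bounded-width rank-decomposition yields in polynomial time.

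The heart of the argument is a Feferman--Vaught-style \emph{composition lemma}. Fix $q$ to be the quantifier rank of $\varphi$ (respectively of $\psi$), and work with \MSO rank-$q$ types of the labelled graphs arising in $\TT$, viewed as relational structures with a unary predicate per label class and carrying $l$ free set variables ($l=0$ for the model-checking part, $l=1$ for the optimization part). By Fact~\ref{fact:representatives} there are only finitely many such types, and by Lemma~\ref{lem:typeformula} each is captured by a single formula. I would show that $\equiv^{\MSO}_q$ is a congruence for the parse operations: for every operation $\mathrm{op}$ of arity at most $2$, the rank-$q$ type of $\mathrm{op}(H_1,H_2)$ together with a set assignment is determined by, and effectively computable from, the rank-$q$ types of the two pieces together with the restrictions of the assignment. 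The proof is a game argument based on Theorem~\ref{thm:msogames}: from winning duplicator strategies on the two pieces one builds a winning strategy on the composite coordinatewise; a spoiler point- or set-move on $\mathrm{op}(H_1,H_2)$ splits into a move on $H_1$ and a move on $H_2$, the duplicator answers in each piece separately, and since the operations depend only on labels and not on vertex identities, the union of the two resulting partial isomorphisms is again a partial isomorphism.

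Granting the composition lemma, both statements follow by bottom-up dynamic programming over $\TT$. For model checking, compute at every node $t$ the rank-$q$ type of the labelled graph $G_t$ it generates --- the leaves being constant-size and the internal nodes using the composition lemma --- and at the root test whether $\varphi$ belongs to $\mathit{type}_q(G)$, a decision on constant-size data. For optimization, maintain at each node $t$ a table $c_t$ that maps each rank-$q$ one-variable type $\tau$ to the minimum (respectively maximum) of $\Card{S'}$ over all $S'\subseteq V(G_t)$ with $\mathit{type}_q(G_t,S')=\tau$, taking the value $+\infty$ (respectively $-\infty$) when $\tau$ is not realized. Leaves carry the two entries ``vertex in $S$'' and ``vertex out of $S$''; at an internal node with operation $\mathrm{op}$ and children $t_1,t_2$, every realizable pair $(\tau_1,\tau_2)$ contributes the $\mathrm{op}$-composite of $\tau_1$ and $\tau_2$ with value $c_{t_1}(\tau_1)+c_{t_2}(\tau_2)$, and one takes the minimum (respectively maximum) over all pairs yielding a given $\tau$; unary operations merely rename types. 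At the root, output the minimum (respectively maximum) of $c_{\mathrm{root}}(\tau)$ over those $\tau$ that contain $\psi$, or report that no such $S$ exists when there is none; a witnessing set is recovered by standard trace-back through the tables. Each table has constant size and each node costs constant time, so the overall running time is polynomial.

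The main obstacle is the composition lemma itself --- that \MSO rank-$q$ types form a congruence for the parse operations and that the composition map is computable; everything after that is bookkeeping. The delicate points in the game argument are the set moves, which are global and must be handled by splitting a single spoiler set on $\mathrm{op}(H_1,H_2)$ into one set per piece, and the verification that the chosen operations genuinely depend only on the label classes, which is what legitimises gluing the two partial isomorphisms. A lesser technical point is making the passage from a rank-decomposition to a bounded-label parse tree explicit and polynomial, with an honest (exponential in $d$ but constant) bound on the number of labels.
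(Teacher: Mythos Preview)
The paper does not prove this theorem; it is stated as a citation from~\cite{GanianHlineny10} and used as a black box in the proofs of Theorems~\ref{the:mso-mc} and~\ref{the:mso-opt}. There is thus no ``paper's own proof'' to compare against.

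That said, your sketch is a correct outline of the standard argument behind such results (essentially the Courcelle--Makowsky--Rotics line for clique-width, transported to rank-width via the Oum--Seymour bound). The reduction from a rank-decomposition of width $d$ to a $k(d)$-label clique-width expression in polynomial time is exactly what~\cite{OumSeymour06} provides, and your Feferman--Vaught composition lemma for the clique-width operations, proved via the game characterization of Theorem~\ref{thm:msogames}, is the right engine. The dynamic programming for the optimization variant---tabulating, for each rank-$q$ type $\tau$ with one free set variable, the minimum cardinality of a realizing set---is also standard and correct. The one place where your sketch is a bit thin is the claim that the type-composition map is \emph{computable}: the game argument gives you that the composite type is \emph{determined} by the component types, but to actually compute it you need either to run the game on canonical representatives of each type (which exist by Fact~\ref{fact:representatives}) or to work syntactically with Hintikka formulas; you should say which. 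Otherwise the plan is sound.
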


\section{Rank-width Covers}\label{section:rwc}

Let $G^1$ be the trivial single-vertex graph, and let $\CCC$ be a graph class
such that $G^1\in \CCC$.  We define a \emph{$\CCC$\hy cover of $G$} as a
modular partition $\{U_1,\dots,U_k\}$ of $V(G)$ such that the induced subgraph $G[U_i]$ belongs to
the class $\CCC$ for each $i \in [k]$. Accordingly, the \emph{$\CCC$\hy cover number} of $G$ is the
size of a smallest $\CCC$\hy cover of~$G$. 

Of special interest to us are the classes $\RRR_d$ of graphs of
\emph{rank-width} at most~$d$. We call the $\RRR_d$\hy cover number also the
\emph{rank-width-$d$ cover number}.  If $\CCC$ is the class of complete and
edgeless graphs, then the $\CCC$\hy cover number equals the neighborhood
diversity~\cite{Lampis12}, and clearly $\CCC\subsetneq \RRR_1$.
 Figure~\ref{fig:diagram} shows the relationship between the
rank-width-$d$ cover number and some other graph invariants.


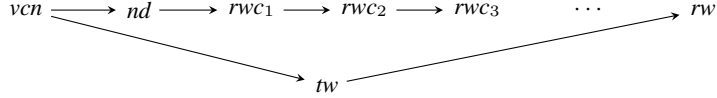
\begin{figure}[tb]
  \centering

   \begin{tikzpicture}
     \draw 

     (-1,0)      node (vcn) {$\vcn$} 
     (.5,0)      node (nd) {$\nd$} 
     (2,0)      node (r1) {$\rwc{1}$} 
     (3.5,0)      node (r2) {$\rwc{2}$} 
     (5,0)      node (r3) {$\rwc{3}$} 
     (6.5,0)      node (dots) {$\cdots$} 
     (8,0)      node (rw) {$\rw$} 

     (3,-1)      node (tw) {$\tw$} 

;
    \draw[->,shorten >=1pt,>=stealth]    
    (vcn) edge (nd)
    (nd) edge (r1)
    (r1) edge (r2)
    (r2) edge (r3)

    (vcn) edge (tw)
    (tw) edge (rw)
;
 
       \end{tikzpicture}%

       \caption{Relationship between graph invariants: the vertex
         cover number ($\vcn$), the neighborhood diversity ($\nd$), the
         rank-width-$d$ cover number ($\rwc{d}$), the rank-width
         ($\rw$), and the treewidth ($\tw$). An arrow from $A$ to $B$
         indicates that for any  graph class for which $B$ is bounded also
         $A$ is bounded. }
  \label{fig:diagram}
\end{figure}

We state some further properties of rank-width-$d$ covers.
\begin{proposition}
  Let $\vcn$, $\nd$, and $\rw$ denote the vertex cover number, the
  neighborhood diversity, and the rank-width of a graph $G$,
  respectively. Then the following (in)equalities hold for any $d \in
  \Nat$:
\begin{enumerate}
\item $\rwc{d}(G)\leq  \nd(G)\leq 2^{\vcn(G)}$, \label{nd1}
\item if $d\geq rw(G)$, then $|\rwc{d}(G)|=1$. \label{nd2}
\end{enumerate}
\end{proposition}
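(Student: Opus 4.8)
The plan is to read all three parts of the proposition off the definitions: in each case it suffices either to exhibit a modular partition whose blocks induce graphs of rank-width at most~$d$, or to count the blocks of the canonical such partition.

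First I would handle $\rwc{d}(G)\le\nd(G)$. Consider the partition of $V(G)$ into \emph{neighbourhood-type classes}, where vertices $u$ and $v$ lie in the same class precisely when $N(u)\setminus\{v\}=N(v)\setminus\{u\}$; by definition of neighbourhood diversity this partition has $\nd(G)$ blocks. I would check two things about such a block $X$. (i)~$X$ is a module: every vertex $w\notin X$ is, by the defining condition, adjacent either to all of $X$ or to none of it. (ii)~$G[X]$ is complete or edgeless: if $X$ contained an edge $uv$, then by the type condition $u$ and $v$ are adjacent to every other vertex of $X$, and iterating this makes $X$ a clique, so $X$ cannot simultaneously contain a non-edge. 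Since a complete or edgeless graph has rank-width at most~$1$ and $d\ge1$ (as $d\in\Nat$), every $G[X]$ lies in $\RRR_1\subseteq\RRR_d$. Hence the neighbourhood-type partition is an $\RRR_d$-cover of $G$ with $\nd(G)$ blocks, and a smallest $\RRR_d$-cover is no larger; this gives $\rwc{d}(G)\le\nd(G)$.

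For $\nd(G)\le 2^{\vcn(G)}$ I would fix a minimum vertex cover $C$ of $G$, so $\Card{C}=\vcn(G)$ and $I:=V(G)\setminus C$ is independent. Each $v\in I$ satisfies $N(v)\subseteq C$, and two vertices of $I$ with the same neighbourhood inside $C$ have the same neighbourhood type; hence the vertices of $I$ realise at most $2^{\Card{C}}$ types, and the $\Card{C}$ cover vertices contribute at most $\Card{C}$ further types, which is the counting argument behind Lampis's estimate~\cite{Lampis12}. Finally, for the claim that $\rwc{d}(G)=1$ when $d\ge\rw(G)$: the one-block partition $\{V(G)\}$ is a modular partition, since $V(G)$ is vacuously a module, and $G[V(G)]=G$ has rank-width at most~$d$, so $G\in\RRR_d$; thus $\{V(G)\}$ is an $\RRR_d$-cover of size~$1$, and since every cover has at least one block, $\rwc{d}(G)=1$.

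I do not expect a genuine obstacle in this proposition; it is a direct consequence of the definitions and of known facts about neighbourhood diversity. The only two points that deserve a line of verification are that a neighbourhood-type class really is a module inducing a complete or edgeless subgraph --- items~(i)--(ii) above --- and the elementary count bounding $\nd(G)$ in terms of $\vcn(G)$.
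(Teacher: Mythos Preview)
Your approach is exactly the paper's: exhibit the neighbourhood-diversity partition as an $\RRR_1$-cover (hence an $\RRR_d$-cover for every $d\in\Nat$), invoke Lampis for the bound in terms of $\vcn$, and read part~(2) straight off the definition by taking the trivial one-block partition $\{V(G)\}$. The paper does this in two sentences; you have simply filled in the details it omits, and your verifications of~(i) and~(ii) are correct.

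One remark is worth making. The counting argument you sketch actually yields $\nd(G)\le 2^{\vcn(G)}+\vcn(G)$, not $\nd(G)\le 2^{\vcn(G)}$: the $2^{\vcn(G)}$ possible neighbourhoods inside $C$ account only for the vertices of the independent set $I$, while the $\vcn(G)$ cover vertices may contribute further types. This additive correction is in fact Lampis's bound, and the sharper inequality stated in the proposition is not true in general (for instance the path $P_5$ has $\vcn=2$ but $\nd=5$). So the discrepancy here is an imprecision in the proposition's statement rather than a flaw in your reasoning; the paper sidesteps the issue by simply citing~\cite{Lampis12} without reproducing the argument.
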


\begin{proof}
  (\ref{nd1})~The neighborhood diversity of a graph is also a rank-width-$1$
  cover. The neighborhood diversity is known to be upper-bounded by
  $2^{\vcn(G)}$ \cite{Lampis12}.
  (\ref{nd2})~follows immediately from the definition of
  rank-width-$d$ covers.
\end{proof}

\subsection{Finding the Cover}

Next we state several properties of modules of graphs. These will be
used to obtain a polynomial algorithm for finding smallest
rank-width-$d$ covers.

The \emph{symmetric difference} of sets $A,B$ is $A \mathop{\triangle}
B = (A \setminus B) \cup (B \setminus A)$. Sets $A,B$ \emph{overlap}
if $A \cap B \neq \emptyset$ but neither $A \subseteq B$ nor $B
\subseteq A$.

\begin{definition}\label{def:partitive} Let $\mathcal{S} \subseteq 2^S$ be a family of
  subsets of a set $S$. We call $\mathcal{S}$ \emph{partitive} if it
  satisfies the following properties:
\begin{enumerate}
\item $S \in \mathcal{S}$, $\emptyset \notin \mathcal{S}$, and $\{x\} \in \mathcal{S}$ for each $x \in S$.\
\item For every pair of overlapping subsets $A, B \in \mathcal{S}$,
  the sets $A \cup B, A \cap B, A \mathop{\triangle} B, A \setminus~B$,
  and $B \setminus A$ are contained in $\mathcal{S}$.
  \end{enumerate}
\end{definition}

\begin{theorem}[\cite{CheinHabibMaurer81}]\label{thm:modpartitive}
  The family of modules of a graph $G$ is partitive.
\end{theorem}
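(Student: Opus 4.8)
The plan is to check the two defining conditions of a partitive family directly, with ground set $S = V(G)$ and $\mathcal{S}$ the collection of all modules of $G$. Condition~1 is immediate from the definition of a module: $V(G)$ is a module since it has no external vertices that could violate the adjacency requirement, $\emptyset$ is excluded because modules are required to be nonempty, and each singleton $\{x\}$ is a module because the ``all-or-nothing'' adjacency condition over a one-element set is vacuous.

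For Condition~2, fix overlapping modules $A, B \in \mathcal{S}$. The overlap hypothesis supplies a vertex $x \in A \cap B$, and since neither set contains the other it also supplies vertices $a \in A \setminus B$ and $b \in B \setminus A$; in particular all five sets $A \cap B$, $A \cup B$, $A \triangle B$, $A \setminus B$, $B \setminus A$ are nonempty, so only the module property itself remains to be verified for each. For a set $Z$ among these, I would show that an arbitrary vertex $v \notin Z$ has uniform adjacency to $Z$. The one tool used throughout is the elementary observation that if $M$ is a module, $u \notin M$, and $w \in M$, then $u$ is adjacent to $w$ iff $u$ is adjacent to every vertex of $M$; composing two such equivalences along a common vertex is what transports adjacency information between $A$ and $B$.

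The cases then go as follows. For $A \cap B$: any $v \notin A \cap B$ lies outside $A$ or outside $B$, and uniformity over the containing module restricts to the subset. For $A \cup B$: any $v \notin A\cup B$ is uniform over $A$ and over $B$ separately, and its adjacency to $x$ forces these two behaviours to coincide. For $A \setminus B$: if $v \notin A$ the claim is inherited from $A$; if $v \in A \cap B$, then for $w, w' \in A \setminus B$ one routes $v$'s adjacency through $b$ (using that $B$ is a module, since $w, w' \notin B$ while $b, v \in B$) and equates $b \sim w$ with $b \sim w'$ (using that $A$ is a module, since $b \notin A$ while $w, w' \in A$), yielding $v \sim w \iff v \sim w'$. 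The case $B \setminus A$ is symmetric. Finally, for $A \triangle B$ --- whose modularity does \emph{not} follow from the previous cases, since $A \setminus B$ and $B \setminus A$ are disjoint rather than overlapping --- one takes $v \notin A \triangle B$, so either $v \in A \cap B$ or $v \notin A \cup B$, and in each subcase chains two of the above equivalences (through $x$ in the second subcase, through the module structure of $A$ and of $B$ in the first) to show that $v$ treats a vertex of $A \setminus B$ and a vertex of $B \setminus A$ identically; together with the uniformity on each piece separately this gives uniformity on $A \triangle B$.

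I expect the only real obstacle to be the bookkeeping for $A \triangle B$: because it is not an overlap combination of the other sets, its modularity has to be argued from scratch, and one must split carefully on the position of the external vertex relative to $A \cap B$ and to $A \cup B$ and choose the correct pivot vertex in each subcase. Everything else is a routine case analysis, and the degenerate situations (sets of size at most one) are handled automatically since the uniformity condition is then vacuous.
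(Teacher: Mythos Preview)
The paper does not prove Theorem~\ref{thm:modpartitive}; it merely quotes it as a known result from \cite{CheinHabibMaurer81} and uses it as a black box (specifically in the proof of Lemma~\ref{lem:union}). So there is no ``paper's own proof'' to compare your argument against.

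That said, your direct verification is correct. Condition~1 is indeed immediate, and your case analysis for Condition~2 goes through: the intersection and union cases are straightforward, and your pivoting argument for $A\setminus B$ via an auxiliary vertex $b\in B\setminus A$ is exactly the standard trick. Your handling of $A\mathbin{\triangle}B$ is also sound---you correctly note that it does not reduce to the earlier cases via another overlap, and your two subcases ($v\in A\cap B$ versus $v\notin A\cup B$) cover all external vertices, with the chaining through $w\sim w'$ in the first subcase being the key step. One minor simplification you could make: once you have established that $A\cup B$ is a module, the subcase $v\notin A\cup B$ for $A\mathbin{\triangle}B$ follows immediately since $A\mathbin{\triangle}B\subseteq A\cup B$, so you need not re-derive uniformity there. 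In short, your write-up supplies a self-contained proof of a result the paper takes on citation.
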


\begin{lemma}[\cite{BuixuanHabibLimouzyDemontgolfier09}]\label{lem:minmodule} Let $G$ be a graph and $x,y \in V(G)$. There is a
  unique minimal (with respect to set inclusion) module $M$ of $G$
  such that $x, y \in M$, and $M$ can be computed in time
  $O(\Card{V(G)}^2)$.
\end{lemma}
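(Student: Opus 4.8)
The plan is to obtain existence and uniqueness of $M$ from the partitivity of the module family (Theorem~\ref{thm:modpartitive}), and then to compute $M$ by a greedy ``module closure'' starting from $\{x,y\}$.

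For existence and uniqueness, let $\mathcal{M}$ be the family of all modules $N$ of $G$ with $x,y\in N$. It is nonempty, as $V(G)\in\mathcal{M}$, and finite, as $G$ has only finitely many modules. I would first show that $\mathcal{M}$ is closed under intersection: given $A,B\in\mathcal{M}$, either one of them contains the other, in which case $A\cap B\in\{A,B\}\subseteq\mathcal{M}$, or $A$ and $B$ overlap (they share $x$, so their intersection is nonempty, and neither contains the other), and then $A\cap B$ is again a module by Theorem~\ref{thm:modpartitive} while still containing $x$ and $y$. Iterating this over a finite enumeration of $\mathcal{M}$ shows that $M:=\bigcap_{N\in\mathcal{M}}N$ lies in $\mathcal{M}$; since $M$ is contained in every module containing $x$ and $y$, it is the unique minimal such module.

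To compute $M$, I would maintain a set $W$, initialized to $W=\{x,y\}$, and call a vertex $v\in V(G)\setminus W$ a \emph{splitter} of $W$ if $v$ has both a neighbor and a non-neighbor in $W$. As long as $W$ has a splitter $v$, add $v$ to $W$. When the procedure halts, $W$ has no splitter, hence $W$ is a module, and it contains $x$ and $y$, so $W\in\mathcal{M}$. Correctness rests on the invariant that $W\subseteq N$ for every $N\in\mathcal{M}$: this holds initially, and if $W\subseteq N$ and $v$ splits $W$, then $v$ has a neighbor and a non-neighbor inside $N$, so $v\in N$ because $N$ is a module. Hence at termination $W\subseteq M$, and combined with $W\in\mathcal{M}$ this gives $W=M$.

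For the running time, the procedure performs at most $\Card{V(G)}$ iterations, since $\Card{W}$ strictly increases. I would keep, for each $v\notin W$, a counter $c(v)$ recording the number of neighbors of $v$ in $W$, so that $v$ is a splitter exactly when $0<c(v)<\Card{W}$. Searching for a splitter then costs $O(\Card{V(G)})$ per iteration, and updating the counters after adding $v$ costs $O(\deg(v))$, which telescopes to $O(\Card{V(G)}^2)$ over the whole run; together this gives the claimed $O(\Card{V(G)}^2)$ bound. The only point requiring care is the correctness invariant of the closure procedure — that every vertex it inserts is genuinely forced into every module containing $x$ and $y$ — but this is exactly the ``module forcing'' observation used above; the partitivity step and the complexity bookkeeping are then routine.
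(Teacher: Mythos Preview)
The paper does not prove this lemma; it is quoted with a citation to \cite{BuixuanHabibLimouzyDemontgolfier09} and used as a black box. So there is no ``paper's own proof'' to compare against.

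That said, your argument is correct and self-contained. The uniqueness part via Theorem~\ref{thm:modpartitive} is clean: closure of $\mathcal{M}$ under pairwise intersection plus finiteness gives a least element. The closure procedure you describe is the standard forcing algorithm for the smallest module containing a given seed set, and your invariant (every splitter of $W$ must lie in any module $N\supseteq W$) is exactly the right one. The $O(\Card{V(G)}^2)$ bookkeeping with degree counters is also standard and correct. In short, you have supplied a valid proof where the paper simply defers to the literature.
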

\begin{restatable}{lemma}{lemunion}\label{lem:union}
  Let $d \in \Nat$ be a constant. Let $G$ be a graph and let $M_1, M_2$
  be modules of $G$ such that $M_1 \cap M_2 \neq \emptyset$ and $\max(rw(G[M_1]),
  rw(G[M_2])) \leq d$. Then $M_1
 \cup M_2$ is a module of $G$ and $rw(G[M_1
  \cup M_2]) \leq d$.
\end{restatable}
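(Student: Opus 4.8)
The plan is to establish the two claims of Lemma~\ref{lem:union} in sequence, first the combinatorial fact that $M_1 \cup M_2$ is a module, and then the bound on its rank-width.

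For the first part, I would appeal to partitivity of the module family (Theorem~\ref{thm:modpartitive}). If $M_1 \subseteq M_2$ or $M_2 \subseteq M_1$, then $M_1 \cup M_2$ equals the larger of the two and there is nothing to prove. Otherwise, since $M_1 \cap M_2 \neq \emptyset$ by hypothesis, the modules $M_1$ and $M_2$ overlap in the sense of Definition~\ref{def:partitive}, and partitivity immediately gives $M_1 \cup M_2 \in \mathcal{S}$, i.e.\ $M_1 \cup M_2$ is a module of $G$. This part is essentially free.

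For the bound $rw(G[M_1 \cup M_2]) \le d$, the key structural observation is that $M_1 \cap M_2$ is itself a module of $G$ (again by partitivity, in the overlapping case; trivially otherwise), and moreover $M_1 \cap M_2$ is a module of $G[M_1]$ and of $G[M_2]$. Because $M_1$ and $M_2$ are both modules of $G$, every vertex of $M_2 \setminus M_1$ sees either all or none of $M_1$; combined with $M_1 \cap M_2$ being a module, this means that within $G[M_1 \cup M_2]$ the set $M_1 \cap M_2$ behaves like a single ``super-vertex'' as far as adjacencies to $M_1 \setminus M_2$ and to $M_2 \setminus M_1$ are concerned. The plan is therefore: pick any vertex $v \in M_1 \cap M_2$, and argue that $G[M_1 \cup M_2]$ can be obtained, up to the cut-rank-relevant structure, from the disjoint-ish union of $G[M_1]$ and $G[M_2]$ glued along $v$. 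More precisely, I would show that for the rank-decomposition one can take rank-decompositions $(T_1,\mu_1)$ of $G[M_1]$ and $(T_2,\mu_2)$ of $G[M_2]$, each of width at most $d$, and build a rank-decomposition of $G[M_1 \cup M_2]$ by identifying the leaf $\mu_1(v)$-edge with the leaf $\mu_2(v)$-edge region, attaching the $M_2 \setminus M_1$ leaves. The cut-rank of any edge in the resulting tree that lies inside the $T_1$-part induces a bipartition $(X, Y)$ of $M_1 \cup M_2$ where, crucially, $Y \cap (M_2 \setminus M_1)$ and $Y \cap (M_1 \cap M_2)$ have identical neighbourhoods into $X \cap M_1$ (since $M_2$ is a module of $G$ and $M_1 \cap M_2$ is a module); hence the rows of the adjacency submatrix contributed by $M_2 \setminus M_1$ are copies of rows already present, and the rank does not increase beyond that of the corresponding edge in $T_1$. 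A symmetric argument handles edges in the $T_2$-part, and the single gluing edge separates $M_1 \setminus M_2$ from $(M_1 \cap M_2) \cup (M_2 \setminus M_1)$, whose cut-rank is at most $\rho_{G[M_1]}(M_1 \setminus M_2) \le d$ because, again, $M_2 \setminus M_1$ adds only duplicate rows.

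The main obstacle is making precise the ``duplicate rows'' bookkeeping: one must carefully track, for each edge of the glued decomposition, exactly which bipartition of $M_1 \cup M_2$ it induces and verify that every row (or column) of the $\mathrm{GF}(2)$ adjacency submatrix coming from a vertex in $M_2 \setminus M_1$ equals the row coming from some fixed vertex in $M_1 \cap M_2$ — using that $M_2$ is a module of $G$ to control edges between $M_2 \setminus M_1$ and $M_1 \setminus M_2$, and that $M_1 \cap M_2$ is a module of $G[M_2]$ (equivalently, that $M_1$ is a module of $G$) to control edges within $M_2$. Once that is nailed down, the width bound follows since adding duplicate rows to a matrix leaves its rank unchanged, so every edge of the new decomposition has width at most $\max(rw(G[M_1]), rw(G[M_2])) \le d$, giving $rw(G[M_1 \cup M_2]) \le d$.
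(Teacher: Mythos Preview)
Your handling of the first claim (that $M_1\cup M_2$ is a module) via partitivity is exactly what the paper does, and is fine.

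The rank-width bound, however, has a genuine gap. First, the construction is not well-defined: if you glue the full decompositions $(T_1,\mu_1)$ of $G[M_1]$ and $(T_2,\mu_2)$ of $G[M_2]$ at the leaf for $v$, every vertex of $(M_1\cap M_2)\setminus\{v\}$ acquires two leaves. You hint at keeping only the $M_2\setminus M_1$ leaves from $T_2$, but even if one reads this charitably as gluing $T_1$ with a decomposition of $G[(M_2\setminus M_1)\cup\{v\}]$, the key claim you rely on fails. You assert that for an edge of $T_1$ inducing the bipartition $(X,Y)$ of $M_1\cup M_2$ (with $M_2\setminus M_1\subseteq Y$ and $v\in Y$), the vertices of $M_2\setminus M_1$ have the same neighbourhood into $X$ as $v$ does. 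This is true only for rows $u\in X\setminus M_2$, where the $M_2$-module property applies; for $u\in X\cap(M_1\cap M_2)$ there is no such constraint. Concretely, take $M_1=\{a,v,v',v''\}$, $M_2=\{v,v',v'',w\}$, with $a$ adjacent to all of $M_2$, $w$ adjacent to all of $M_1$, and inside $M_1\cap M_2$ only the edge $vv'$. Both $G[M_1]$ and $G[M_2]$ have rank-width~1; in the unique width-1 decomposition of $G[M_1]$ the internal edge separates $\{a,v''\}$ from $\{v,v'\}$. After attaching $w$ on the $\{v,v'\}$-side, the cut $\{a,v''\}$ versus $\{v,v',w\}$ has adjacency matrix $\begin{pmatrix}1&1&1\\0&0&1\end{pmatrix}$ of rank~2, so your glued decomposition has width~2, not~1.

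The paper avoids this by gluing \emph{three} decompositions rather than two: one each for $G[M_{11}\cup\{v_{12}\}]$, $G[M_{12}\cup\{v_{22}\}]$, and $G[M_{22}\cup\{v_{12}\}]$, where $M_{11}=M_1\setminus M_2$, $M_{12}=M_1\cap M_2$, $M_{22}=M_2\setminus M_1$. The point of this finer splitting is that for every edge of the combined tree, the side \emph{not} containing the gluing node lies entirely inside one of $M_{11}$, $M_{12}$, $M_{22}$; this homogeneity is precisely what makes the duplicate-column argument go through (using only the $M_2$-module property in case~1, only the $M_1$-module property in case~2, and a two-step argument using both in case~3). Your two-piece gluing cannot guarantee this homogeneity, and that is where the argument breaks.
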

\newcommand{\pflemunion}[0]{
\begin{proof}
  If $M_1 \subseteq M_2$ or $M_2 \subseteq M_1$ the result is
  immediate. Suppose $M_1$ and $M_2$ overlap and let
  $M_{11} = M_1 \setminus M_2, M_{22} = M_2 \setminus M_1$, and
 $M_{12} =
  M_1 \cap M_2$. It follows from Theorem~\ref{thm:modpartitive} that these
  sets are modules of $G$. Let $v_{11} \in M_{11}, v_{22} \in M_{22}$, and
  $v_{12} \in M_{12}$. We show that $rw(G[M_1 \cup M_2]) \leq d$. By
  assumption, both
 $G[M_1]$ and $G[M_2]$ have rank-width at most $d$. Since
  rank-width
 is preserved by taking induced subgraphs, the graphs $G_{11} =
  G[M_{11} \cup \{v_{12}\}]$, $G_{12} = G[M_{12} \cup \{v_{22}\}]$,
 and
  $G_{22} = G[M_{22} \cup \{v_{12}\}]$ also have rank-width at most
 $d$. Let
  $\mathcal{T}_{11} = (T_{11}, \mu_{11})$, $\mathcal{T}_{12}
 = (T_{12},
  \mu_{12})$, and $\mathcal{T}_{22} = (T_{22}, \mu_{22})$
 be witnessing rank
  decompositions of $G_{11}, G_{12}$, and $G_{22}$,
 respectively.
 
 We construct a rank decomposition $\mathcal{T} = (T, \mu)$ of $G[M_1 \cup
 M_2]$ as follows. Let $l_{22}$ be the leaf (note that $\mu_{12}$ is
 bijective) of $T_{12}$ such that $\mu_{12}(v_{22}) = l_{22}$. Moreover, let
 $l_{12}$ and $l_{12}'$ be the leaves of $T_{11}$ and $T_{22}$ such that
 $\mu_{11}(v_{12}) = l_{12}$ and $\mu_{22}(v_{12}) = l_{12}'$,
 respectively. We obtain $T$ from $T_{12}$ by adding disjoint copies of
 $T_{11}$ and $T_{22}$ and then identifying $l_{22}$ with the copies of
 $l_{12}$ and $l_{12}'$. Since $T_{11}, T_{12}$, and $T_{22}$ are subcubic, so
 is $T$.
 
 We define the mapping $\mu: M_1 \cup M_2 \rightarrow \SB t
  \SM $ t
 is a leaf of $T \SE$ by
  \begin{align*}
    \mu(v) = \begin{cases}
      \mu_{12}(v) &\text{if $v \in M_{12}$,} \\
      c(\mu_{11}(v)) &\text{if $v \in M_{11}$,} \\
      c(\mu_{22}(v)) &\text{otherwise,}
    \end{cases}
  \end{align*}
  where $c$ maps nodes in $T_{11} \cup T_{22}$ to their copies in
  $T$. The mappings $\mu_{11}, \mu_{12}$, and $\mu_{22}$ are
  bijections and $c$ is injective, so $\mu$ is injective. By
  construction, the image of $M_1 \cup M_2$ under $\mu$ is the set of
  leaves of $T$, so $\mu$ is a bijection. Thus $\mathcal{T} = (T,
  \mu)$ is a rank decomposition of $G[M_1 \cup M_2]$.
 
  We prove that the width of $\mathcal{T}$ is at most $d$. Given a rank
  decomposition $\mathcal{T}^* = (T^*, \mu^*)$ and an edge $e \in T^*$, the
  connected components of $T^* \setminus \{e\}$ induce a bipartition $(X, Y)$
  of the leaves of $T^*$. We set $f: (\mathcal{T}^*, e) \mapsto
  ({\mu^*}^{-1}(X), {\mu^*}^{-1}(Y))$. Take any edge $e$ of $T$. There is a
  natural bijection $\beta$ from the edges in $T$ to the edges of $T_{11} \cup
  T_{12} \cup T_{22}$. Accordingly, we distinguish three cases for $e ' =
  \beta(e)$:
 
  \begin{enumerate}
  \item $e' \in T_{11}$. Let $(U, W) = f(\mathcal{T}_{11},
    e')$. Without loss of generality assume that $v_{12} \in
    W$. Then by construction of $\mathcal{T}$, we have $f(\mathcal{T},
    e) = (U, W \cup M_2)$. Pick any $u \in U \subseteq M_{11}$ and $v
    \in M_2 \setminus W$. Since $M_2$ is a module of $G$ with $v,
    v_{12} \in M_2$ but $u \notin M_2$ we have $\mathbf{A}_G(u, v) =
    \mathbf{A}_G(u, v_{12})$. As a consequence, $\mathbf{A}_G[U, W \cup
    M_2]$ can be obtained from $\mathbf{A}_G[U, W]$ by copying the
    column corresponding to $v_{12}$. This does not increase the rank
    of the matrix. \label{rwmodule:case1}
  \item $e' \in T_{22}$. This case is symmetric to case~\ref{rwmodule:case1}, with
    $M_{22}$ and $M_1$ taking the roles of $M_{11}$ and $M_2$, respectively.
  \item $e' \in T_{12}$. Let $(U, W) = f(\mathcal{T}_{12},
    e')$. Without loss of generality assume that $v_{22} \in W$. Then
    $f(\mathcal{T}, e) = (U, W \cup M_{11} \cup M_{22})$. Let $u \in U
    \subseteq M_{12}$ and $v \in M_{22}$. Since $M_1$ is a module and $u \in
    M_1$ but $v, v_{22} \notin M_1$, we must have $\mathbf{A}_G(u, v) =
    \mathbf{A}_G(u, v_{22})$, so one can simply copy the column corresponding to
    $v_{22}$. Now consider $w \in M_{11}$. Suppose $wu \in E(G)$. Since $u,
    v_{22} \in M_2$ but $w \notin M_2$, we must have $wv_{22} \in E(G)$
    because $M_2$ is a module. Then since $w, u \in M_1$ and $v_{22} \notin
    M_1$ we must have $uv_{22} \in E(G)$ because $M_1$ is a module. A
    symmetric argument proves that $uv_{22} \in E(G)$ implies $wu \in
    E(G)$. It follows that $\mathbf{A}_G(u, w) = \mathbf{A}_G(u, v_{22})$. So
    again $\mathbf{A}_G[U, W \cup M_{11} \cup M_{22}]$ can be obtained from
    $\mathbf{A}_G[U, W]$ by copying columns, and thus the two matrices have the
    same rank.
  \end{enumerate}
  Since $\beta$ is bijective, this proves that the rank of any bipartite
  adjacency matrix induced by removing an edge $e \in T$ is bounded by $d$. We
  conclude that the width of $\mathcal{T}$ is at most $d$ and thus
  \mbox{$rw(G[M_1 \cup M_2]) \leq d$}.
\end{proof}}
\longversion{\pflemunion}

\begin{definition}
  Let $G$ be a graph and $d \in \Nat$.  We define a relation
  $\sim^G_d$ on $V(G)$ by letting $v \sim^G_d w$ if and only if there
  is a module $M$ of $G$ with $v,w \in M$ and $rw(G[M]) \leq d$. We
  drop the superscript from $\sim^G_d$ if the graph $G$ is clear from
  context.
\end{definition}

\begin{restatable}{proposition}{propeqrel}\label{prop:eqrel}
  For every graph $G$ and $d \in \Nat$ the relation $\sim_d$ is an
  equivalence relation, and each equivalence class $U$ of $\sim_d$ is a module
  of $G$ with $rw(G[U]) \leq d$.
\end{restatable}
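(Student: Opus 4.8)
The plan is to check the three axioms of an equivalence relation and then the statement about classes, with Lemma~\ref{lem:union} doing all the real work.

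Reflexivity and symmetry are immediate: $\{v\}$ is a module of $G$ with $rw(G[\{v\}])=0\le d$, so $v\sim_d v$; and the defining condition ``there is a module $M$ with $v,w\in M$ and $rw(G[M])\le d$'' is visibly symmetric in $v$ and $w$. For transitivity, suppose $v\sim_d w$ via a module $M_1$ and $w\sim_d u$ via a module $M_2$. Then $w\in M_1\cap M_2$, so $M_1\cap M_2\ne\emptyset$, and $\max(rw(G[M_1]),rw(G[M_2]))\le d$; hence Lemma~\ref{lem:union} gives that $M_1\cup M_2$ is a module of $G$ with $rw(G[M_1\cup M_2])\le d$. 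Since $v,u\in M_1\cup M_2$, this witnesses $v\sim_d u$.

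For the claim about classes, let $U$ be an equivalence class of $\sim_d$. The first thing I would record is a closure property: if $M$ is a module of $G$ with $rw(G[M])\le d$ and $M$ meets $U$, then $M\subseteq U$. Indeed, picking $v\in M\cap U$, every $z\in M$ satisfies $z\sim_d v$ (both lie in $M$), hence $z\in U$. Now fix a vertex $v_0\in U$. For each $w\in U$ choose, using $v_0\sim_d w$, a module $M_w$ with $v_0,w\in M_w$ and $rw(G[M_w])\le d$; by the closure property $M_w\subseteq U$, and since $w\in M_w$ we get $U=\bigcup_{w\in U}M_w$. Enumerating $U=\{w_1,\dots,w_k\}$, set $N_1=M_{w_1}$ and $N_{i+1}=N_i\cup M_{w_{i+1}}$. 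Every $N_i$ contains $v_0$, so each successive union has nonempty intersection, and a straightforward induction applying Lemma~\ref{lem:union} at each step shows that each $N_i$ is a module of $G$ with $rw(G[N_i])\le d$; in particular $U=N_k$ is such a module.

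The proof is really just bookkeeping around Lemma~\ref{lem:union}. The one place that needs a moment's thought is the closure property $M\subseteq U$, which guarantees that the iterated union never escapes the class $U$ (and in fact exhausts it), together with the observation that fixing the common vertex $v_0$ is what makes the modules $M_w$ pairwise intersecting, so that Lemma~\ref{lem:union} can be chained. I do not expect any genuine obstacle beyond this.
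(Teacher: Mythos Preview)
Your proof is correct and follows essentially the same approach as the paper's: reflexivity, symmetry, and transitivity are handled identically, and for the claim about classes both you and the paper fix a vertex $v_0\in U$, pick for each $w\in U$ a witnessing module $M_w\ni v_0$, and use Lemma~\ref{lem:union} to conclude that $\bigcup_w M_w$ is a module of rank-width at most $d$ equal to $U$. The only cosmetic difference is that you state the closure property $M_w\subseteq U$ upfront and spell out the inductive chaining of Lemma~\ref{lem:union} explicitly, whereas the paper applies the lemma to the full union in one line and proves $W=U$ by double inclusion afterward.
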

\newcommand{\pfpropeqrel}[0]{
\begin{proof} 
  Let $G$ be a graph and $d \in \Nat$. For every $v \in V(G)$, the
  singleton $\{v\}$ is a module of $G$, so $\sim_d$ is reflexive. Symmetry of
  $\sim_d$ is trivial. For transitivity, let $u,v,w \in V(G)$ such that $u
  \sim_d v$ and $v \sim_d w$. Then there are modules $M_1, M_2$ of $G$ such
  that $u,v \in M_1$, $v, w \in M_2$, and $rw(G[M_1]), rw(G[M_2]) \leq d$. By
  Lemma~\ref{lem:union} $M_1 \cup M_2$ is a module of $G$ with $rw(G[M_1 \cup
  M_2]) \leq d$. In combination with $u,w \in M_1 \cup M_2$ that implies $u
  \sim_d w$. This concludes the proof that $\sim_d$ is an equivalence
  relation.
  
  Now let $v \in V(G)$ and let $U = [v]_{\sim_d}$. For each $u \in U$ there is
  a module $W_u$ of $G$ with $u,v \in W_u$ and $rw(G[W_u]) \leq d$. By
  Lemma~\ref{lem:union}, $W = \bigcup_{u \in U} W_u$ is a module of $G$ and
  $rw(G[W]) \leq d$.  Clearly, $[v]_{\sim_d} \subseteq W$. On the other hand,
  $u \in W$ implies $v \sim_d u$ by definition of $\sim_d$, so $W \subseteq
  [v]_{\sim_d}$. That is, $W = [v]_{\sim_d}$.
\end{proof}}
\longversion{\pfpropeqrel}
\begin{restatable}{corollary}{cormincover}\label{cor:mincover}
  Let $G$ be a graph and $d \in \Nat$. 
  The equivalence classes of $\sim_d$ form a smallest  rank-width-$d$
  cover of $G$.
\end{restatable}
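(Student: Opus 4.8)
The plan is to combine Proposition~\ref{prop:eqrel}, which already tells us that the equivalence classes of $\sim_d$ form a rank-width-$d$ cover, with a short argument that this particular cover is coarsest among all rank-width-$d$ covers, hence of minimum size.

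First I would invoke Proposition~\ref{prop:eqrel} directly: the equivalence classes $U_1,\dots,U_m$ of $\sim_d$ partition $V(G)$, each $U_i$ is a module of $G$, and $rw(G[U_i])\le d$ for every $i\in[m]$. By the definition of a $\RRR_d$\hy cover (equivalently, a rank-width-$d$ cover), this witnesses that $\{U_1,\dots,U_m\}$ \emph{is} a rank-width-$d$ cover of $G$. So it remains only to show $m$ is minimum.

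For minimality, I would take an arbitrary rank-width-$d$ cover $\{W_1,\dots,W_k\}$ of $G$ and argue that each part $W_j$ is contained in a single equivalence class of $\sim_d$. Indeed, $W_j$ is a module of $G$ with $rw(G[W_j])\le d$, so for any two vertices $v,w\in W_j$ the module $W_j$ itself witnesses $v\sim_d w$; hence all of $W_j$ lies in one class $[v]_{\sim_d}$. Since the $W_j$ cover $V(G)$, every equivalence class of $\sim_d$ is a union of some of the $W_j$'s; in particular the map sending each $W_j$ to the unique class containing it is surjective onto $\{U_1,\dots,U_m\}$, so $m\le k$. As $\{W_1,\dots,W_k\}$ was arbitrary, $\{U_1,\dots,U_m\}$ is a smallest rank-width-$d$ cover.

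I do not expect a real obstacle here; the statement is essentially a corollary packaging of Proposition~\ref{prop:eqrel} together with the observation that $\sim_d$ is, by construction, the finest relation whose classes can serve as cover parts. The only point requiring a moment's care is the direction of the inequality: one must note that collapsing each part of an arbitrary cover into its enclosing $\sim_d$-class can only \emph{decrease} the number of parts, so the $\sim_d$-partition is never larger than any competing cover, which is exactly what ``smallest'' demands.
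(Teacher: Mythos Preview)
Your proposal is correct and follows essentially the same approach as the paper: both invoke Proposition~\ref{prop:eqrel} for the ``is a cover'' part and then use the observation that any module $M$ with $rw(G[M])\le d$ witnesses $v\sim_d w$ for all $v,w\in M$, hence each part of an arbitrary rank-width-$d$ cover lies inside a single $\sim_d$-class. The only cosmetic difference is that the paper phrases minimality contrapositively (a partition with strictly fewer parts must, by pigeonhole, have a block containing two $\sim_d$-inequivalent vertices and so cannot be a valid cover), whereas you phrase it directly via the surjection from cover parts onto $\sim_d$-classes.
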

\newcommand{\pfcormincover}[0]{
\begin{proof}
  Let $\mathcal{U} = \{U_1,\dots, U_k\}$ be the set of equivalence
  classes of $\sim_d$.  It is immediate from
  Proposition~\ref{prop:eqrel} that $\mathcal{U}$ is a rank-width-$d$
  cover of $G$. Let $\mathcal{V} = \{V_1, \dots, V_j\}$ be a partition
  of $V(G)$ with $j < k$. By the pigeonhole principle, there have to
  be vertices $v_1, v_2 \in V(G)$ and indices $i_1, i_2 \in [k]$, $i
  \in [j]$ such that $v_1, v_2 \in V_j$ but $v_1 \in U_{i_1}$ and $v_2
  \in U_{i_2}$, where $i_1 \neq i_2$. Thus $v_1 \nsim_d v_2$, so there
  is no module $M$ of $V(G)$ such that $v_1, v_2 \in M$ and $rw(G[M])
  \leq d$. In particular, $V_i$ is not a module or $rw(G[V_i]) >
  d$. So $\mathcal{V}$ is not a rank-width-$d$ cover of $G$.
\end{proof}}
\longversion{\pfcormincover}

\begin{restatable}{proposition}{propequdecision}\label{prop:equdecision}
  Let $d \in \Nat$ be a constant. Given a graph $G$ and two
  vertices $v, w \in V(G)$, we can decide whether $v \sim_d w$ in
  polynomial time.
\end{restatable}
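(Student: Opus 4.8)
The plan is to replace the search over \emph{arbitrary} witnessing modules by a check on a \emph{single} canonical one. By Lemma~\ref{lem:minmodule}, there is a unique minimal module $M_{vw}$ of $G$ with $v,w \in M_{vw}$, computable in time $O(\Card{V(G)}^2)$. I claim that $v \sim_d w$ if and only if $rw(G[M_{vw}]) \leq d$, and that this equivalence immediately yields a polynomial-time test.

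The ``if'' direction is trivial: $M_{vw}$ is itself a module containing $v$ and $w$ with $rw(G[M_{vw}]) \leq d$, which is exactly the condition in the definition of $\sim_d$. For the ``only if'' direction, suppose $v \sim_d w$, so there is a module $M$ of $G$ with $v,w \in M$ and $rw(G[M]) \leq d$. I first argue $M_{vw} \subseteq M$: if $M$ and $M_{vw}$ overlapped, then by Theorem~\ref{thm:modpartitive} the set $M \cap M_{vw}$ would be a module of $G$ containing both $v$ and $w$ and properly contained in $M_{vw}$, contradicting the minimality of $M_{vw}$; hence either $M_{vw} \subseteq M$ or $M \subseteq M_{vw}$, and the latter again forces $M = M_{vw}$ by minimality. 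In either case $M_{vw} \subseteq M$, so $G[M_{vw}]$ is an induced subgraph of $G[M]$. Since rank-width does not increase under taking induced subgraphs, $rw(G[M_{vw}]) \leq rw(G[M]) \leq d$, as claimed.

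The algorithm is now immediate: compute $M_{vw}$ using Lemma~\ref{lem:minmodule}, then run the algorithm of Theorem~\ref{thmrankdecomp} with $k = d$ on $G[M_{vw}]$, and answer ``yes'' precisely when it outputs a rank-decomposition of width at most $d$. Both steps run in time polynomial in $\Card{V(G)}$ because $d$ is a fixed constant.

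The only real subtlety — and the step that needs the most care — is the containment $M_{vw} \subseteq M$ for every module $M$ containing $v$ and $w$, i.e.\ that ``minimal'' here is in fact ``minimum''; this is exactly where partitivity of the module family (Theorem~\ref{thm:modpartitive}) is used, via closure under intersection of overlapping modules. Once this is established, monotonicity of rank-width under induced subgraphs finishes the argument, and crucially no enumeration of the (possibly exponentially many) candidate modules is required.
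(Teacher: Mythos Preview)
Your proof is correct and follows essentially the same approach as the paper: compute the unique minimal module $M_{vw}$ via Lemma~\ref{lem:minmodule}, use that it is contained in any witnessing module together with monotonicity of rank-width under induced subgraphs, and then test $rw(G[M_{vw}]) \leq d$ via Theorem~\ref{thmrankdecomp}. The only difference is that you spell out, via partitivity, why the unique \emph{minimal} module is in fact the \emph{minimum} one, a step the paper leaves implicit.
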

\newcommand{\pfpropequdecision}[0]{
\begin{proof}
  By Lemma~\ref{lem:minmodule} we can compute the unique minimal (with
  respect to set inclusion) module $M$ containing $v$ and $w$ in time
  $O(\Card{V(G)}^2)$. Since rank-width is preserved for induced subgraphs,
  there is a module $M'$ containing $v$ and $w$ with $rw(G[M']) \leq d$ if and
  only if $rw(G[M]) \leq d$. By Theorem~\ref{thmrankdecomp} this can be
  decided in time $O(\Card{V(G)}^3)$.
\end{proof}}
\longversion{\pfpropequdecision}

\begin{proof}[of Theorem~\ref{the:compute-cover}]
  Let $d \in \Nat$ be a constant. Given a graph $G$, we can compute the set of
  equivalence classes of $\sim_d$ by testing whether $v \sim_d w$ for each
  pair of vertices $v,w \in V(G)$. By Proposition~\ref{prop:equdecision}, this
  can be done in polynomial time, and by Corollary~\ref{cor:mincover},
  $V(G)/\mathord \sim_d$ is a smallest rank-width-$d$ cover of $G$.
\end{proof}

\section{Kernels for MSO Model Checking}\label{section:msokernel}
In this section, we show that every \MSO model checking problem admits a
polynomial kernel when parameterized by the $\CCC$\hy cover number of the
input graph, where $\CCC$ is some recursively enumerable class of
graphs satisfying the following properties:
\begin{enumerate}[label=(\Roman*), align=left]
\item $\CCC$ contains the single\hy vertex graph, and a
  $\mathcal{C}$\hy cover of a graph $G$ with minimum cardinality can
  be computed in polynomial time. \label{msokernelc1}
    \item There is an algorithm $\mathbb{A}$ that decides whether $G
      \models \varphi$ in time polynomial in $\Card{V(G)}$ for any
      fixed \MSO sentence $\varphi$ and any graph $G \in
      \CCC$. \label{msokernelc2}
    \end{enumerate}
    Let $G$ be a graph and $U \subseteq V(G)$. Let $\vec{v}$ be an
    $m$\hy tuple of vertices of $G$, and let $\vec{V}$ be an $l$\hy
    tuple of sets of vertices of $G$. We write $\vec{V}|_U = (V_1 \cap
    U, \dots, V_l \cap U)$ to refer to the elementwise intersection of
    $\vec{V}$ with $U$. Similarly, we let $\vec{v}|_U =
    (v_{i_1},\dots,v_{i_t})$, $t \leq m$ denote the subsequence of
    elements from $\vec{v}$ contained in $U$. If $\{U_1,\dots,U_k\}$ is a
    modular partition of $G$ and $i \in [k]$ we will abuse notation
    and write $\vec{v}|_i = \vec{v}|_{U_i}$ and $\vec{V}|_i =
    \vec{V}_{U_i}$ if there is no ambiguity about what partition the index belongs to.
\begin{definition}[Congruent]\label{def:congruent}
  Let $q$ and $l$ be non\hy negative integers and let $G$ and $G'$ be graphs
  with modular partitions $\{M_1,\dots,M_k\}$ and $\{M_1',\dots,M_k'\}$,
  respectively. Let $\vec{V_0}$ be an $l$\hy tuple of subsets of $V(G)$ and let
  $\vec{U_0}$ be an $l$\hy tuple of subsets of $V(G')$. We say $(G, \vec{M},
  \vec{V_0})$ and $(G', \vec{M}', \vec{U_0})$ are \emph{$q$\hy congruent}
  if the following conditions are met:
  \begin{enumerate}
  \item For every $i,j \in [k]$ with $i \neq j$, $M_i$ and $M_j$
    are adjacent in $G$ if and only if $M_i'$ and $M_j'$ are adjacent
    in $G'$.\label{cond:modulesadjacent}
  \item For each $i \in [k]$, $\mathit{type}_q(G[M_i], \vec{V_0}|_i) =
    \mathit{type}_q(G'[M_i'],\vec{U_0}|_i)$ \label{cond:moduletypes}
\end{enumerate}
\end{definition}
\begin{restatable}{lemma}{lempartitiongame}\label{lem:partitiongame}
  Let $q$ and $l$ be non\hy negative integers and let $G$ and $G'$ be graphs with
  modular partitions $\{M_1,\dots,M_k\}$ and $\{M_1',\dots,M_k'\}$. Let
  $\vec{V_0}$ be an $l$\hy tuple of subsets of $V(G)$ and let $\vec{U_0}$
  be an $l$\hy tuple of subsets of $V(G')$. If $(G, \vec{M},
  \vec{V_0})$ and $(G',\vec{M}', \vec{U_0})$ are $q$\hy congruent,
  then $\mathit{type}_q(G, \vec{V_0}) = \mathit{type}_q(G',
  \vec{U_0})$.
  \end{restatable}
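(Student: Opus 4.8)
The plan is to appeal to the Ehrenfeucht--Fra\"iss\'e characterization of \MSO types (Theorem~\ref{thm:msogames}) and exhibit a winning strategy for the duplicator in the $q$\hy round \MSO game on $G$ and $G'$ starting from $(\vec{V_0},\vec{U_0})$. By condition~\ref{cond:moduletypes} and Theorem~\ref{thm:msogames}, for each $i\in[k]$ the duplicator has a winning strategy $\sigma_i$ in the $q$\hy round game on $G[M_i]$ and $G'[M_i']$ starting from $(\vec{V_0}|_i,\vec{U_0}|_i)$. The duplicator plays all $k$ sub-games in parallel: a point move of the spoiler in $G$, say a vertex $v\in M_i$, is forwarded as a point move in the $i$\hy th sub-game, and the response $u\in V(G'[M_i'])=M_i'$ prescribed by $\sigma_i$ is returned in $G'$; point moves in $G'$ are handled symmetrically, and since $\sigma_i$ answers on $G'[M_i']$ its responses always land in $M_i'$, so the induced correspondence respects modules. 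A set move $S\subseteq V(G)$ is forwarded simultaneously to \emph{all} sub-games, the $i$\hy th one receiving $S\cap M_i$; the duplicator collects the responses $S_i'\subseteq M_i'$ and plays $S'=\bigcup_{i\in[k]}S_i'$ in $G'$ (symmetrically for a set move in $G'$). Because $\{M_1,\dots,M_k\}$ and $\{M_1',\dots,M_k'\}$ are partitions, each round of the main game advances every affected sub-game by exactly one round, so no sub-game is played for more than $q$ rounds and each $\sigma_i$ stays applicable.

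It remains to verify that after $q$ rounds the played tuples $(\vec v,\vec u)$ form a partial isomorphism of $(G,\vec{V_0}\vec V)$ and $(G',\vec{U_0}\vec U)$. Everything internal to a single module is immediate: for point moves $v_a,v_b$ that land in the same module $M_i$, equalities, adjacencies, and memberships in the restricted tuples $\vec{V_0}|_i\vec V|_i$ versus $\vec{U_0}|_i\vec U|_i$ are preserved because $\sigma_i$ is winning and $G[M_i],G'[M_i']$ are induced subgraphs; moreover for $v_a\in M_i$ the relation $v_a\in V_r$ is the same as $v_a\in V_r\cap M_i$, so membership in $\vec{V_0}\vec V$ agrees with membership in $\vec{V_0}|_i\vec V|_i$. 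Two point moves in \emph{different} modules are never mapped to equal vertices, since the sub-games keep each coordinate inside its own module and the $M_i$ (resp.\ $M_i'$) are pairwise disjoint, so the equality condition holds there trivially. For the adjacency condition across modules, take $v_a\in M_i$, $v_b\in M_j$ with $i\neq j$ and images $u_a\in M_i'$, $u_b\in M_j'$: since $M_i$ is a module of $G$, whether $v_av_b\in E(G)$ depends only on whether $M_i$ and $M_j$ are adjacent in $G$, and likewise $u_au_b\in E(G')$ depends only on whether $M_i'$ and $M_j'$ are adjacent in $G'$; by condition~\ref{cond:modulesadjacent} these two statements coincide. Hence the duplicator wins, so $(G,\vec{V_0})\equiv^{\MSO}_q(G',\vec{U_0})$, and Theorem~\ref{thm:msogames} yields $\mathit{type}_q(G,\vec{V_0})=\mathit{type}_q(G',\vec{U_0})$.

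The main obstacle is bookkeeping rather than a genuine difficulty: one must ensure that dispatching a single set move to all sub-games at once never forces any sub-game past $q$ rounds (it does not, since a set move counts as one round in each sub-game and only $q$ rounds are played in total), and that the reassembled move $S'=\bigcup_i S_i'$ is consistent with the per-module responses, which holds because the $M_i'$ partition $V(G')$. A parallel induction on $q$ in the style of a Feferman--Vaught decomposition would also prove the lemma, but the game argument makes the inter-module adjacency case transparent via the module property together with condition~\ref{cond:modulesadjacent}.
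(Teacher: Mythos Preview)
Your argument is correct and follows essentially the same approach as the paper: invoke Theorem~\ref{thm:msogames}, lift the per-module winning strategies $\sigma_i$ to a global duplicator strategy by restricting each spoiler move to the modules and reassembling the responses, and then verify the partial isomorphism by handling the intra-module case via the sub-game strategies and the inter-module adjacency case via condition~\ref{cond:modulesadjacent}. Your additional remarks on bookkeeping (sub-games never exceed $q$ rounds, reassembly is consistent because the $M_i'$ partition $V(G')$) make explicit points the paper leaves implicit, but the underlying proof is the same.
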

\newcommand{\pflempartitiongame}[0]{
\begin{proof}
  For $i \in [k]$, we write $G_i = G[M_i]$ and $G_i' = G'[M_i']$. By
  Theorem~\ref{thm:msogames}, Condition~\ref{cond:moduletypes} of
  Definition~\ref{def:congruent} is equivalent to $(G_i, \vec{V_0}|_i)
  \equiv^{\MSO}_q (G'_i, \vec{U_0}|_i)$. That is, for each $i \in [k]$,
  duplicator has a winning strategy $\pi_i$ in the $q$\hy round \MSO game
  played on $G_i$ and $G_i'$ starting from $(\vec{V_0}|_i, \vec{U_0}|_i)$. We
  construct a strategy witnessing $(G, \vec{V_0}) \equiv^{\MSO}_q (G',
  \vec{U_0})$ by aggregating duplicator's moves from these $k$ games in the
  following way:
    \begin{enumerate}
    \item Suppose spoiler makes a set move $W$ and assume without loss of
      generality that $W \subseteq V(G)$. For $i \in [k]$, let $W_i = M_i \cap
      W$, and let $W_i'$ be duplicator's response to $W_i$ according to
      $\pi_i$. Then duplicator responds with $W' = \cup_{i=1}^k W_i'$.
    \item Suppose spoiler makes a point move $s$ and again assume without loss
      of generality that $s \in V(G)$. Then $s \in M_i$ for some $i \in
      [k]$. Duplicator responds with $s' \in M_i'$ according to $\pi_i$.
    \end{enumerate}
    Assume duplicator plays according to this strategy and consider a play of
    the $q$\hy round \MSO game on $G$ and $G'$ starting from
    $(\vec{V_0}, \vec{U_0})$. Let $v_1,\dots,v_m\in V(G)$ and
    $u_1,\dots,u_m \in V(G')$ be the point moves and $V_1, \dots, V_l
    \subseteq V(G')$ and $U_1,\dots, U_l \subseteq V(G)$ be the set
    moves, so that $l + m = q$ and the moves made in the same round
    have the same index. We claim that $(\vec{v},\vec{u})$
    defines a partial isomorphism between $(G,\vec{V_0}\vec{V})$ and
    $(G', \vec{U_0}\vec{U})$.
    \begin{itemize}
    \item Let $j_1,j_2 \in [m]$ and let $i_1, i_2 \in [k]$ such that $v_{j_1}
      \in M_{i_1}$ and $v_{j_2} \in M_{i_2}$. Suppose $i_1 = i_2 = i$. Since
      duplicator plays according to a winning strategy in the game on $G_i$
      and $G_i'$, the restriction $(\vec{v}|_i, \vec{u}|_i)$ defines a partial
      isomorphism between $(G_i, (\vec{V_0}\vec{V})|_i)$ and $(G_i',
      (\vec{U_0}\vec{U})|_i)$. It follows that $(v_{j_1},v_{j_2}) \in E(G)$ if
      and only if $(u_{j_1},u_{j_2}) \in E(G')$ and $v_{j_1} = v_{j_2}$ if and
      only if $u_{j_1} = u_{j_2}$.  Now suppose $i_1 \neq i_2$. Then $v_{j_1}
      \neq v_{j_2}$ and also $u_{j_1} \neq u_{j_2}$ since $u_{j_1} \in
      M_{i_1}'$ and $u_{j_2} \in M_{i_2}'$ by choice of duplicator's
      strategy. By congruence, $M_{i_1}$ and $M_{i_2}$ are adjacent in $G$ if
      and only if $M_{i_1}'$ and $M_{i_2}'$ are adjacent in $G'$, so we must
      have $(v_{j_1}, v_{j_2}) \in E(G)$ if and only if $(u_{j_1}, u_{j_2})
      \in E(G')$.
    \item Let $j \in [m]$ and let $i \in [k]$ such that $v_j \in M_i$. By
      construction of duplicator's strategy, we have $u_j \in M_i'$. Note that
      if $x \in S$ then $x \in S'$ if and only if $x \in S'|_S$ for arbitrary
      sets $S$ and $S'$. Combined with the fact that $(\vec{v}|_i,
      \vec{u}|_i)$ defines a partial isomorphism between $(G_i,
      (\vec{V_0}\vec{V})|_i)$ and $(G_i', (\vec{U_0}\vec{U})|_i)$, this
      observation implies that $v_i$ is contained in any of the sets from
      $\vec{V_0}\vec{V}$ if and only if $u_i$ is contained in the sets from
      $\vec{U_0}\vec{U}$ with the same indices.
  \end{itemize}
\end{proof}}
\longversion{\pflempartitiongame}

\begin{restatable}{lemma}{lemconstantrep}\label{lem:constantrep} Let $\CCC$ be
  a recursively enumerable graph class and let $q$ be a non\hy negative integer
  constant. Let $G \in \CCC$ be a graph. If $G \models \varphi$ can be decided
  in time polynomial in $\Card{V(G)}$ for any fixed $\varphi \in \MSO_{q,0}$
  then one can in polynomial time compute a graph $G' \in \CCC$ such that
  $\Card{V(G')}$ is bounded by a constant and $\mathit{type}_q(G) =
  \mathit{type}_q(G')$.
\end{restatable}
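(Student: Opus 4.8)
The plan is to enumerate graphs in $\CCC$ and stop at the first one whose rank-$q$ type matches that of $G$, while bounding in advance how large such a graph must be so that the search terminates in polynomial time. First I would invoke Fact~\ref{fact:representatives} to fix the finite system $R$ of representatives of $\MSO_{q,0}/\mathord\equiv$; since $q$ is a constant, $\Card{R}$ is a constant, and hence the number of possible rank-$q$ types (subsets of $R$, up to equivalence) is bounded by a constant $N = 2^{\Card{R}}$. The key observation is that among all graphs in $\CCC$, there are at most $N$ distinct rank-$q$ types, so for every graph $H \in \CCC$ there is a graph $H^\star \in \CCC$ with $\mathit{type}_q(H) = \mathit{type}_q(H^\star)$ and $\Card{V(H^\star)}$ at most some constant $b = b(\CCC, q)$: simply take, for each realized type, a witnessing graph of smallest order, and let $b$ be the maximum of these finitely many orders. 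This $b$ exists (it is well-defined as a max over a finite set) even though we may not be able to compute it; but for the algorithm we do not need to know $b$ explicitly.

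Next I would describe the algorithm. Using Lemma~\ref{lem:typeformula} (with $l = 0$), compute from $G$ an \MSO sentence $\Phi \in \MSO_{q,0}$ such that $G'' \models \Phi$ iff $\mathit{type}_q(G) = \mathit{type}_q(G'')$ for every graph $G''$; because $G \models \varphi$ is decidable in polynomial time for each fixed $\varphi \in \MSO_{q,0}$ (our hypothesis), this computation runs in time polynomial in $\Card{V(G)}$. Now enumerate the graphs of $\CCC$ (possible since $\CCC$ is recursively enumerable) in nondecreasing order of number of vertices; for each enumerated graph $G'$, decide whether $G' \models \Phi$. By the preceding paragraph, the first graph in the enumeration with $\mathit{type}_q(G) = \mathit{type}_q(G')$ has at most $b$ vertices, so the search halts after inspecting only graphs on at most $b$ vertices — a constant number of graphs, each of constant size — and the total time spent in the enumeration and in the type checks on these small graphs is bounded by a constant. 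Output the $G'$ found; by construction $G' \in \CCC$, $\Card{V(G')} \leq b$ is bounded by a constant, and $\mathit{type}_q(G) = \mathit{type}_q(G')$.

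The subtle point — and the main thing to get right in the write-up — is the interplay between computability and the complexity bound. We are \emph{not} claiming to compute $b$; we only need that $b$ is finite to guarantee the enumeration terminates, and once it terminates the running time is automatically $O(1)$ in the enumeration phase. The only part of the procedure whose running time actually depends on $\Card{V(G)}$ is the computation of $\Phi$ via Lemma~\ref{lem:typeformula}, and that is polynomial by hypothesis. One should also note that deciding $G' \models \Phi$ for the small graphs $G'$ is unconditionally fast since $\Phi$ is a fixed formula and $G'$ has constant size — here we do \emph{not} need property~\ref{msokernelc2} for $G'$, only for $G$ (to build $\Phi$). Finally, since $\Phi$ is satisfied by $G$ itself and $G \in \CCC$, the set of candidate graphs is nonempty, so the enumeration does find a witness and the algorithm is well-defined.
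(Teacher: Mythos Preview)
Your argument is essentially the paper's: compute the type formula $\Phi$ via Lemma~\ref{lem:typeformula}, enumerate $\CCC$, and return the first $G'$ with $G'\models\Phi$, using finiteness of rank-$q$ types (Fact~\ref{fact:representatives}) to bound the search by a constant. One small technical slip: you enumerate $\CCC$ ``in nondecreasing order of number of vertices,'' but a merely recursively enumerable class need not admit such an enumeration (that would require decidability of membership). The paper avoids this by fixing an arbitrary enumeration of $\CCC$ and observing that, for each of the finitely many types $T$, the \emph{position} (and hence the size) of the first graph of type $T$ in that fixed enumeration is a constant depending only on $T$; your bound $b$ should be defined this way rather than as the minimum-size witness.
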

\newcommand{\pflemconstantrep}[0]{
\begin{proof}
  By Lemma~\ref{lem:typeformula} we can compute a formula $\Phi$ capturing the
  type $T$ of $G$ in polynomial time. Given $\Phi$, a graph $G' \in \CCC$
  satisfying $\Phi$ can be effectively computed as follows. We start
  enumerating $\CCC$ and check for each graph $G' \in \CCC$ whether $G'
  \models \Phi$. If this is the case, we stop and output $G'$. Since $G
  \models \Phi$ this procedure must terminate eventually. Fixing $\CCC$ and
  the order in which graphs are enumerated, the number of graphs we have to
  check depends only on $T$. By Fact~\ref{fact:representatives} the number of
  rank $q$\hy types is finite for each $q$, so we can think of the total
  number of checks as bounded by a constant. Moreover the time spent on each
  check depends only on $T$ and the size of the graph $G'$. Because the number
  of graphs enumerated is bounded by a constant, we can think of the latter as
  bounded by a constant as well. Thus the algorithm computing a model of
  $\Phi$ runs in constant time.
\end{proof}}
\longversion{\pflemconstantrep}

\begin{restatable}{lemma}{lemcongruent}\label{lem:congruent}
  Let $q$ be a non\hy negative integer constant, and let $\CCC$ be a
  recursively enumerable
  graph class satisfying \ref{msokernelc2}. Then given a graph $G$ and a
  $\CCC$\hy cover $\{U_1,\dots,U_k\}$, one can in polynomial time compute a graph
  $G'$ with modular partition $\{U_1',\dots,U_k'\}$ such that $(G, \vec{U})$ and
  $(G', \vec{U}')$ are $q$\hy congruent and for each $i \in [k]$, $G'[U_i']
  \in \CCC$ and the number of vertices in $U_i'$ is bounded by a constant.
\end{restatable}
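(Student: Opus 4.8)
The plan is to replace each part of the given cover, independently, by a small graph with the same rank-$q$ type, and then reconnect these representatives so as to mirror the adjacency pattern of the original parts.

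First I would record the elementary observation that adjacency between two distinct modules is complete: if $U_i$ and $U_j$ are modules of $G$ and some $x \in U_i$ is adjacent to some $y \in U_j$, then because $U_i$ is a module and $y \notin U_i$, every vertex of $U_i$ is adjacent to $y$; hence for any $x' \in U_i$ we have $x'y \in E(G)$, and because $U_j$ is a module and $x' \notin U_j$, every vertex of $U_j$ is adjacent to $x'$. So $U_i$ and $U_j$ being adjacent in $G$ means the complete bipartite graph between them is present. Consequently, all edges of $G$ that are not internal to a single part are determined by the (polynomially computable) binary relation ``$U_i$ is adjacent to $U_j$'' on the parts of the cover.

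For each $i \in [k]$, the induced subgraph $G[U_i]$ lies in $\CCC$ by definition of a $\CCC$\hy cover; since $\CCC$ is recursively enumerable and, by~\ref{msokernelc2}, $G'' \models \varphi$ is decidable in polynomial time for any fixed MSO sentence $\varphi$ and any graph $G'' \in \CCC$, Lemma~\ref{lem:constantrep} applies (the case $l = 0$) and yields, in time polynomial in $\Card{V(G[U_i])}$ and hence in $\Card{V(G)}$, a graph $G_i' \in \CCC$ on a constant number of vertices with $\mathit{type}_q(G_i') = \mathit{type}_q(G[U_i])$. I then take pairwise disjoint copies of $G_1', \dots, G_k'$, let $U_i' = V(G_i')$, and form $G'$ by adding, for every pair $i \neq j$ with $U_i$ adjacent to $U_j$ in $G$, all edges between $U_i'$ and $U_j'$. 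No edges are added inside any $U_i'$, so $G'[U_i'] = G_i'$; thus $G'[U_i'] \in \CCC$, $\Card{U_i'}$ is bounded by a constant, and $\mathit{type}_q(G'[U_i']) = \mathit{type}_q(G[U_i])$, which is Condition~\ref{cond:moduletypes} of Definition~\ref{def:congruent} (the free-variable tuples being empty here). Each $U_i'$ is a module of $G'$, since any vertex outside $U_i'$ lies in some $U_j'$ and is joined by construction either to all or to none of $U_i'$; and $U_i'$ is adjacent to $U_j'$ in $G'$ exactly when $U_i$ is adjacent to $U_j$ in $G$, giving Condition~\ref{cond:modulesadjacent}. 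Hence $(G, \vec{U})$ and $(G', \vec{U}')$ are $q$\hy congruent.

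For the running time, note $k \le \Card{V(G)}$; the $k$ calls to Lemma~\ref{lem:constantrep} are each polynomial; determining adjacency of $U_i$ and $U_j$ in $G$ takes $O(\Card{V(G)}^2)$ per pair; and assembling $G'$ is cheap because $G'$ has only $\sum_{i} \Card{U_i'} = O(k)$ vertices. I do not anticipate a serious obstacle; the only point needing care is the complete-adjacency observation above, which is what licenses both the reconstruction of all cross-part edges of $G'$ from the adjacency pattern of the original parts and the conclusion that the newly built parts $U_i'$ are again modules.
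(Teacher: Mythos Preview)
Your proof is correct and follows essentially the same approach as the paper's: apply Lemma~\ref{lem:constantrep} to each $G[U_i]$ to obtain a constant-size representative $G_i' \in \CCC$ of the same rank-$q$ type, then rebuild $G'$ by taking the disjoint union of the $G_i'$ and adding complete bipartite connections wherever the original modules were adjacent. You simply spell out in more detail the points the paper leaves as ``readily verified'' (the complete-adjacency observation, that each $U_i'$ is again a module, and the running-time accounting).
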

\newcommand{\pflemcongruent}[0]{
\begin{proof}
  For each $i \in [k]$, we compute a graph $G_i'\in \CCC$ of constant size
  with the same \MSO rank-$q$ type as $G_i = G[U_i]$. By
  Lemma~\ref{lem:constantrep}, this can be done in polynomial
  time. Now let $G'$ be the graph obtained from the disjoint union of
  the graphs $G_i'$ for $i \in [k]$ as follows. For $i \in [k]$, let
  $U_i'$ denote the set of vertices from the copy of $G_i'$. If $U_i$
  and $U_j$ are adjacent in $G$ for $i,j \in [k]$ and $i\neq j$, we
  insert an edge $vw$ for every $v \in U_i'$ and $w \in U_i'$. Then
  $U_1',\dots,U_k'$ is a modular partition of $G'$, and for $i,j
  \in[k]$ and $i \neq j$, modules $U_i$ and $U_j$ are adjacent in $G$
  if and only if $U_i'$ and $U_j'$ are adjacent in $G'$. It is readily
  verified that $(G, \vec{U})$ and $(G', \vec{U}')$ are $q$\hy congruent.
\end{proof}}
\longversion{\pflemcongruent}
\begin{restatable}{proposition}{propmsokernel}\label{prop:msokernel} Let
  $\varphi$ be a fixed \MSO sentence. Let $\CCC$ be a recursively enumerable
  graph class satisfiying \ref{msokernelc1} and \ref{msokernelc2}.  Then
  $\MSOMC{\varphi}$ has a polynomial kernel parameterized by the
  $\mathcal{C}$\hy cover number of the input graph.
\end{restatable}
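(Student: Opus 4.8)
The plan is to reduce $G$ to a small decision-equivalent graph by replacing every part of the cover with a bounded-size module of the same \MSO type. First I would let $q$ be the quantifier rank of the fixed sentence $\varphi$ and, using property~\ref{msokernelc1}, compute in polynomial time a smallest $\CCC$\hy cover $\{U_1,\dots,U_k\}$ of $G$; note that $k$ is exactly the parameter. Since $\CCC$ satisfies~\ref{msokernelc2}, I would then invoke Lemma~\ref{lem:congruent} (with the empty tuple of free-variable interpretations, i.e.\ $l=0$): it produces, in polynomial time, a graph $G'$ with a modular partition $\{U_1',\dots,U_k'\}$ such that $(G,\vec U)$ and $(G',\vec U')$ are $q$\hy congruent, each $G'[U_i']\in\CCC$, and $|U_i'|$ is bounded by a constant depending only on $q$ (equivalently, on $\varphi$).

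Next, Lemma~\ref{lem:partitiongame}, again applied with no free set variables, gives $\mathit{type}_q(G)=\mathit{type}_q(G')$. As $\varphi$ has quantifier rank $q$, it lies in one of these types iff it lies in the other, so $G\models\varphi$ if and only if $G'\models\varphi$; that is, $G$ and $G'$ are decision-equivalent instances of $\MSOMC{\varphi}$. Because the cover has $k$ parts and each $U_i'$ has at most a constant $c=c(\varphi)$ vertices, $G'$ has at most $ck$ vertices, so the instance $G'$ has size bounded by a polynomial — in fact linear — in the parameter. Outputting $G'$ therefore yields a polynomial kernel with a linear number of vertices. Moreover $G'$ is a legitimate instance of the \emph{same} problem (it is merely a graph), so this is a genuine kernel rather than a bikernel; incidentally its $\CCC$\hy cover number is at most $k$, although that is not needed for the size bound.

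The main point to check — and where both hypotheses on $\CCC$ are used — is that the whole construction runs in polynomial time. Computing the cover requires~\ref{msokernelc1}; the construction of the bounded representatives inside Lemma~\ref{lem:congruent} rests on Lemma~\ref{lem:constantrep}, which in turn goes through Lemma~\ref{lem:typeformula} and thus needs polynomial-time \MSO model checking on members of $\CCC$, i.e.\ property~\ref{msokernelc2}. One should also observe that the constant $c(\varphi)$ is genuinely uniform over all inputs: by Fact~\ref{fact:representatives} there are only finitely many rank-$q$ types, hence finitely many representative graphs can be enumerated once and for all, and a single size bound works for every instance. This uniformity (rather than any hard combinatorics) is the only real subtlety; the rest is assembling the already-established lemmas in the indicated order.
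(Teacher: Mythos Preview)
Your proposal is correct and follows essentially the same route as the paper: compute a smallest $\CCC$\hy cover via~\ref{msokernelc1}, apply Lemma~\ref{lem:congruent} (using~\ref{msokernelc2}) to replace each module by a constant\hy size representative yielding a $q$\hy congruent $G'$, and invoke Lemma~\ref{lem:partitiongame} to conclude $\mathit{type}_q(G)=\mathit{type}_q(G')$ and hence $G\models\varphi\Leftrightarrow G'\models\varphi$ with $\Card{V(G')}\in O(k)$. Your additional remarks on uniformity of the constant and on $G'$ being an instance of the same problem are valid elaborations but not needed beyond what the paper states.
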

\newcommand{\pfpropmsokernel}[0]{
\begin{proof} Let $G$ be a graph with $\mathcal{C}$\hy cover number $k$, and
  let $\{U_1, \dots, U_k\}$ be a smallest $\mathcal{C}$\hy cover given by
  \ref{msokernelc1}. Let $q$ be the quantifier rank of $\varphi$. By
  Lemma~\ref{lem:congruent} and \ref{msokernelc2}, we can in polynomial time
  compute a graph $G'$ and a modular partition $\{U_1',\dots,U_k'\}$ of $G'$ such
  that $(G, \vec{U})$ and $(G', \vec{U}')$ are $q$\hy congruent and for each
  $i \in [k]$, $\Card{U_i'}$ is bounded by a constant. It follows from
  Lemma~\ref{lem:partitiongame} that $\mathit{type}_q(G) =
  \mathit{type}_q(G')$. In particular, $G \models \varphi$ if and only if $G'
  \models \varphi$. Moreover, we have $\Card{V(G')} \in O(k)$, so $G'$ is a
  polynomial kernel.
\end{proof}}
\longversion{\pfpropmsokernel}

\begin{proof}[of Theorem~\ref{the:mso-mc}] 
  Immediate from Theorems~\ref{the:compute-cover}, \ref{thmrankdecomp}, and \ref{thm:msorankwidth}
  in combination with Proposition~\ref{prop:msokernel}.
\end{proof}

\begin{corollary} The following problems have polynomial kernels when
  parameterized by the rank\hy width\hy $d$ cover number of the input graph:
  \textsc{Independent Dominating Set}, \textsc{$c$-Coloring},
  \textsc{$c$-Domatic Number}, \textsc{$c$-Partition into Trees},
  \textsc{$c$-Clique Cover}, \textsc{$c$-Partition into Perfect
    Matchings},  \textsc{$c$-Covering by Complete Bipartite Subgraphs}.
\end{corollary}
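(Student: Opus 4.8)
The plan is to derive the statement directly from Theorem~\ref{the:mso-mc}: it suffices to show that each of the seven listed problems is an \MSO model checking problem, i.e., to exhibit for each of them a fixed \MSO sentence $\varphi$ (depending on the constant~$c$, where one occurs) such that a graph $G$ is a yes-instance if and only if $G\models\varphi$. Applying Theorem~\ref{the:mso-mc} to the class $\RRR_d$ of graphs of rank-width at most~$d$, which has bounded rank-width by definition, then yields, for every fixed~$d$, a polynomial kernel with a linear number of vertices for the problem parameterized by the $\RRR_d$\hy cover number, that is, the rank-width-$d$ cover number.

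First I would fix a handful of auxiliary \MSO formulas with a single free set variable to be reused: $\mathrm{ind}(X)\equiv\neg\exists x\exists y\,(X(x)\wedge X(y)\wedge E(x,y))$ (``$X$ is independent''), $\mathrm{cl}(X)\equiv\forall x\forall y\,((X(x)\wedge X(y)\wedge x\neq y)\to E(x,y))$ (``$X$ is a clique''), and $\mathrm{dom}(X)\equiv\forall x\,(X(x)\vee\exists y\,(E(x,y)\wedge X(y)))$ (``$X$ is dominating''), together with the well-known $\MSO_1$ formulas $\mathrm{conn}(X)$ and $\mathrm{acyc}(X)$ expressing that $G[X]$ is connected, respectively acyclic, and a formula $\mathrm{part}_c(X_1,\dots,X_c)$ expressing that $X_1,\dots,X_c$ partition $V(G)$. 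Then \textsc{Independent Dominating Set} is expressed by $\exists X\,(\mathrm{ind}(X)\wedge\mathrm{dom}(X))$; \textsc{$c$-Coloring} by $\exists X_1\cdots\exists X_c\,(\mathrm{part}_c(X_1,\dots,X_c)\wedge\bigwedge_{i\in[c]}\mathrm{ind}(X_i))$; \textsc{$c$-Domatic Number} and \textsc{$c$-Clique Cover} by the same sentence with $\mathrm{ind}$ replaced by $\mathrm{dom}$, respectively by $\mathrm{cl}$; \textsc{$c$-Partition into Trees} by the same sentence with the conjunct replaced by $\mathrm{conn}(X_i)\wedge\mathrm{acyc}(X_i)$; and \textsc{$c$-Covering by Complete Bipartite Subgraphs} by a sentence with $2c$ set variables $A_1,B_1,\dots,A_c,B_c$, stating that each pair $(A_i,B_i)$ is disjoint and induces a complete bipartite graph, i.e.\ $\forall x\forall y\,((A_i(x)\wedge B_i(y))\to E(x,y))$, and that every edge is covered, i.e.\ $\forall u\forall v\,(E(u,v)\to\bigvee_{i\in[c]}((A_i(u)\wedge B_i(v))\vee(A_i(v)\wedge B_i(u))))$. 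Each of these sentences has quantifier rank bounded by a constant depending only on~$c$, so it is a genuine fixed \MSO sentence and Theorem~\ref{the:mso-mc} applies.

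The step I expect to be the main obstacle is \textsc{$c$-Partition into Perfect Matchings}, since partitioning the \emph{edge} set is prima facie an $\MSO_2$ property whereas the meta-theorem only handles $\MSO_1$. The way around this is to observe that every yes-instance is necessarily $c$-regular (each vertex lies in exactly one edge of each of the $c$ matchings), that $c$-regularity is first-order expressible, and that over graph classes of bounded degree every $\MSO_2$-definable property is already $\MSO_1$-definable; hence ``$G$ is $c$-regular and its edge set partitions into $c$ perfect matchings'' is equivalent to a fixed $\MSO_1$ sentence. (Alternatively, one writes a direct $\MSO_1$ encoding that represents each of the $c$ matchings as a $1$-regular spanning subgraph using a bounded amount of vertex-set data.) With the seven sentences in hand the corollary is immediate.
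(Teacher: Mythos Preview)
Your proposal is correct and matches the paper's (implicit) approach: the paper states this corollary without proof, as an immediate consequence of Theorem~\ref{the:mso-mc}, and you simply spell out the verification that each listed problem is $\MSO_1$-definable; your handling of \textsc{$c$-Partition into Perfect Matchings} via $c$-regularity and the bounded-degree collapse of $\MSO_2$ to $\MSO_1$ is exactly the kind of detail the paper leaves to the reader.

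One small caveat (arguably an artifact of the paper's list rather than your argument): the sentence $\exists X\,(\mathrm{ind}(X)\wedge\mathrm{dom}(X))$ is satisfied by \emph{every} graph, since any maximal independent set is dominating, so it does not encode an $\NP$-hard problem as written; the kernelization claim is of course still (trivially) true for that sentence.
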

\section{Kernels for MSO Optimization}\label{section:msooptkernel}
By definition, \MSO formulas can only directly capture decision
problems such as $3$-colorability, but many problems of interest are
formulated as optimization problems. The usual way of transforming
decision problems into optimization problems does not work here, since
the \MSO language cannot handle arbitrary numbers.

Nevertheless, there is a known solution.  Arnborg, Lagergren, and
Seese~\cite{ArnborgLagergrenSeese91} (while studying graphs of bounded
tree-width), and later Courcelle, Makowsky, and
Rotics~\cite{CourcelleMakowskyRotics00} (for graphs of bounded
clique-width), specifically extended the expressive power of MSO logic
to define so-called $\lMS$ optimization problems, and consequently
showed the existence of efficient (parameterized) algorithms for such
problems in the respective cases.

The \MSO optimization problems (problems of the form
$\MSOOPT{\gleq}{\varphi}$) considered here are a streamlined and simplified
version of the formalism introduced in
\cite{CourcelleMakowskyRotics00}. Specifically, we consider only a single free
variable $X$, and ask for a satisfying assignment of $X$ with minimum or
maximum cardinality. To achieve our results, we need a recursively enumerable
graph class $\CCC$ that satisfies \ref{msokernelc1} and \ref{msokernelc2}
along with the following property:
  \begin{enumerate}[label=(\Roman*), start=3,
    align=parleft,leftmargin = 0.8cm]
  \item Let $\varphi = \varphi(X)$ be a fixed \MSO formula. Given a graph $G
    \in \CCC$, a set $S \subseteq V(G)$ of minimum (maximum) cardinality such
    that $G \models \varphi(S)$ can be found in polynomial time, if one exists.
    \label{msooptkernelc3}
\end{enumerate}

Our approach will be similar to the \MSO kernelization algorithm, with one key
difference: when replacing the subgraph induced by a module, the cardinalities
of subsets of a given $q$\hy type may change, so we need to keep track of
their cardinalities in the original subgraph.
\newcommand{\aMSOOPT}[2]{a\textsc{MSO-Opt${}^{#1}_{#2}$}}

To do this, we introduce an annotated version of
$\MSOOPT{\gleq}{\varphi}$. Given a graph $G=(V,E)$, an \emph{annotation}
$\WWW$ is a set of triples $(X,Y,w)$ with $X\subseteq V, Y\subseteq V,
w\in \Nat$. For every set $Z\subseteq V$ we define
\[
{\WWW}(Z)=\sum_{(X,Y,w)
  \in {\WWW}, X\subseteq Z, Y\cap Z =\emptyset} w.
\]
We call the pair $(G,\WWW)$ an \emph{annotated graph}.  If the integer
$w$ is represented in binary, we can represent a triple $(X,Y,w)$ in
space $\Card{X}+\Card{Y}+\log_2(w)$.  Consequently, we may assume that
the size of the encoding of an annotated graph $(G,\WWW)$
is polynomial in $\Card{V(G)}+\Card{\WWW}+\max_{(X,Y,w)\in
  \WWW} \log_2 w$.

Each \MSO formula $\varphi(X)$ and $\gleq\in\{\leq, \geq\}$ gives rise to
an \emph{annotated MSO-optimization problem}.
\begin{quote}
  $\aMSOOPT{\gleq}{\varphi}$\\
  \nopagebreak \emph{Instance}: A graph $G$ with an annotation $\WWW$ and an integer~$r\in
  \Nat$. \\ \nopagebreak \nopagebreak \emph{Question}: Is there a set
  $Z\subseteq V(G)$ such that $G \models \varphi(Z)$ and ${\WWW}(Z)\gleq r$?
\end{quote}

Notice that any instance of $\MSOOPT{\gleq}{\varphi}$ is also an instance
of $\aMSOOPT{\gleq}{\varphi}$ with the trivial annotation
${\WWW}=\SB(\{v\},\emptyset,1) \SM v\in V(G)\SE$. The main result of
this section is a bikernelization algorithm which transforms any
instance of $\MSOOPT{\gleq}{\varphi}$ into an instance of
$\aMSOOPT{\gleq}{\varphi}$; this kind of bikernel is called an
\emph{annotated kernel} \cite{AbukhzamFernau06}. 

The results below are stated and proved for minimization problems
$\aMSOOPT{\leq}{\varphi}$ only. This is without loss of generality
-- the proofs for maximization problems are symmetric.

\begin{restatable}{lemma}{lemsettype}\label{lem:settype}
  Let $q$ and $l$ be non\hy negative integers and let $G$ and $G'$ be a
  graphs such that $G$ and $G'$ have the same $q+l$ \MSO type. Then
  for any $l$\hy tuple $\vec{V}$ of sets of vertices of $G$, there
  exists an $l$\hy tuple $\vec{U}$ of sets of vertices of $G'$ such
  that $\mathit{type}_q(G,\vec{V}) = \mathit{type}_q(G',\vec{U})$.
\end{restatable}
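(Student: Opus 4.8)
The statement says: if $G$ and $G'$ have the same $\MSO$ rank-$(q+l)$ type, then for any $l$-tuple $\vec{V}$ of vertex sets in $G$ there is an $l$-tuple $\vec{U}$ in $G'$ with $\mathit{type}_q(G,\vec{V}) = \mathit{type}_q(G',\vec{U})$.

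The plan is to argue via the Ehrenfeucht–Fraïssé game characterization from Theorem~\ref{thm:msogames}. Since $\mathit{type}_{q+l}(G) = \mathit{type}_{q+l}(G')$, the duplicator has a winning strategy in the $(q+l)$-round $\MSO$ game on $G$ and $G'$ starting from the empty tuples. I would use this strategy to simulate the first $l$ rounds by set moves: let the spoiler play the sets $V_1, \dots, V_l$ in $G$ (one per round), and let $U_1, \dots, U_l$ be the duplicator's responses in $G'$ according to the winning strategy. Set $\vec{U} = (U_1, \dots, U_l)$.

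It remains to show $\mathit{type}_q(G, \vec{V}) = \mathit{type}_q(G', \vec{U})$, which by Theorem~\ref{thm:msogames} amounts to exhibiting a winning strategy for the duplicator in the $q$-round game on $G$ and $G'$ starting from $(\vec{V}, \vec{U})$. I would obtain this simply by restricting the duplicator's winning strategy in the full $(q+l)$-round game: after the $l$ set moves described above have been played, the position reached is exactly a partial isomorphism-consistent position with the distinguished sets $\vec{V}$ and $\vec{U}$, and the remaining $q$ rounds of the duplicator's strategy constitute a winning strategy in the $q$-round game starting from $(\vec{V}, \vec{U})$ — because any play of that shorter game, prefixed by the fixed $l$ set moves, is a play of the original $(q+l)$-round game that the duplicator wins, and the winning condition (partial isomorphism on the point moves, respecting all the set parameters including $\vec{V}, \vec{U}$) is precisely the winning condition for the shorter game on the augmented structures.

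I expect the only subtle point to be bookkeeping: making sure the indices of the set moves and point moves line up correctly between the two games, and that the winning condition for the $q$-round game on $(G,\vec{V})$ and $(G',\vec{U})$ — which requires a partial isomorphism with respect to the tuples $\vec{V}$ extended by the new set moves — is literally a sub-case of the winning condition for the $(q+l)$-round game, where the first $l$ set moves happened to produce exactly $\vec{V}$ and $\vec{U}$. There is no real combinatorial obstacle; the argument is a routine "prefix a strategy" manipulation, and the main care is just in invoking Theorem~\ref{thm:msogames} in both directions with the correct parameters.
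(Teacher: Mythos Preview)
Your argument is correct, but it takes a different route from the paper. The paper gives a short syntactic proof by contradiction: assuming no suitable $\vec{U}$ exists, it takes a formula $\varphi \in \MSO_{q,l}$ characterizing $\mathit{type}_q(G,\vec{V})$ (implicitly using Fact~\ref{fact:representatives}/Lemma~\ref{lem:typeformula}), observes that $G \models \varphi(\vec{V})$ while $G' \not\models \varphi(\vec{U})$ for every $\vec{U}$, and then notes that $\psi = \exists X_1 \cdots \exists X_l\, \varphi$ is an $\MSO_{q+l,0}$ sentence separating $G$ from $G'$, contradicting the hypothesis.

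Your game-theoretic approach---letting the spoiler open with the $l$ set moves $V_1,\dots,V_l$, reading off the duplicator's responses $U_1,\dots,U_l$, and then observing that the residual strategy wins the $q$-round game starting from $(\vec{V},\vec{U})$---is the standard semantic counterpart and is equally valid. It is slightly longer but arguably more transparent, since it makes explicit where the tuple $\vec{U}$ comes from, whereas the paper's proof obtains $\vec{U}$ non-constructively. Either way, the content is the same composition principle; the paper just expresses it on the formula side rather than the game side.
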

\newcommand{\pflemsettype}[0]{\begin{proof}
  Suppose there exists an $l$\hy tuple $\vec{V}$ of sets of vertices
  of $G$, and a formula $\varphi = \varphi(X_1,\dots,X_l) \in
  \MSO_{q,l}$ such that $G \models \varphi(V_1,\dots,V_l)$ but for
  every $l$\hy tuple $\vec{U}$ of sets of vertices of $G'$ we have $G'
  \not \models \varphi(U_1,\dots,U_l)$. Let $\psi = \exists X_1 \dots
  \exists X_l \:\varphi$. Clearly, $\psi \in \MSO_{q+l,0}$ and $G
  \models \psi$ but $G' \not \models \psi$, a contradiction.
\end{proof}}
\longversion{\pflemsettype}

\begin{restatable}{lemma}{lemsmallannotated}\label{lem:smallannotated}
  Let $\varphi = \varphi(X)$ be a fixed \MSO formula and $\CCC$ be a recursively
  enumerable graph class satisfiying \ref{msokernelc2} and
  \ref{msooptkernelc3}.  Then given an instance $(G,r)$ of
  $\MSOOPT{\leq}{\varphi}$ and a $\CCC$\hy cover $\{U_1,\dots,U_k\}$ of $G$, an
  annotated graph $(G', \WWW)$ satisfying the following properties can be
  computed in polynomial time.
  \begin{enumerate}
  \item $(G, r) \in\MSOOPT{\leq}{\varphi}$ if and only if $(G', \WWW, r) \in
  \aMSOOPT{\leq}{\varphi}$. \label{lem27cond1}
\item $\Card{V(G')} \in O(k)$. \label{lem27cond2}
\item The encoding size of $(G', \WWW)$ is $O(k\log(\Card{V(G)}))$. \label{lem27cond3}
\end{enumerate}
\end{restatable}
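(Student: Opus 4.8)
The plan is to build $(G',\WWW)$ by refining the construction behind Lemma~\ref{lem:congruent}: replace each module $G[U_i]$ by a constant-size representative while recording, in the annotation, how the cardinality of any set that a duplicator strategy might pick inside $U_i$ relates to the cardinality of its preimage inside the original $U_i$. Concretely, fix $q$ to be the quantifier rank of $\varphi(X)$, and observe that the ``$q+1$'' rank-type of $G[U_i]$ (the type with no free variables, quantifier rank $q+1$) together with Lemma~\ref{lem:settype} guarantees that every $q$-type of a subset $V_i\subseteq U_i$ is realized by some subset of the representative $G_i'$ — and conversely. So for each module we compute, via Lemma~\ref{lem:constantrep} (using \ref{msokernelc2}), a constant-size $G_i'\in\CCC$ with $\mathit{type}_{q+1}(G[U_i])=\mathit{type}_{q+1}(G_i')$, and we form $G'$ by the same adjacency-between-modules construction as in Lemma~\ref{lem:congruent}. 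The new ingredient is the annotation: for each module $i$ and each subset $S_i'\subseteq U_i'$ of the representative (there are only constantly many), we want the annotation to ``charge'' $S_i'$ the minimum possible size of a subset $S_i\subseteq U_i$ in the original graph having the same $q$-type $\mathit{type}_q(G[U_i],S_i)=\mathit{type}_q(G_i',S_i')$.

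First I would make precise how a solution decomposes. If $Z\subseteq V(G)$ with $G\models\varphi(Z)$, then writing $Z_i=Z\cap U_i$, Lemma~\ref{lem:partitiongame} (with $l=1$) shows that $\mathit{type}_q(G,Z)$ depends only on the module-adjacency pattern and the tuple of local types $\tau_i:=\mathit{type}_q(G[U_i],Z_i)$; in particular whether $G\models\varphi(Z)$ is determined by this data. Since each $\tau_i$ is realized in $G_i'$ by some $Z_i'\subseteq U_i'$ (by $\mathit{type}_{q+1}$-equality and Lemma~\ref{lem:settype}), the set $Z'=\bigcup_i Z_i'$ satisfies $G'\models\varphi(Z')$, again by Lemma~\ref{lem:partitiongame}. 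The converse direction is identical. So $G\models\varphi(Z)$ for some $Z$ of size $\le r$ iff there is a choice of local types $(\tau_1,\dots,\tau_k)$ — realizable locally and consistent so that $\varphi$ holds globally — with $\sum_i (\text{min size of a subset of } U_i \text{ of type }\tau_i)\le r$. This minimum per module per type is exactly what \ref{msooptkernelc3} lets us compute in polynomial time: for each module $i$ and each of the constantly many types $\tau$ (equivalently, for each $\Phi\in\MSO_{q,1}$ from the fixed system of representatives capturing $\tau$ via Lemma~\ref{lem:typeformula}), minimize $|S|$ subject to $G[U_i]\models\Phi(S)$.

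Then I would define the annotation so that $\WWW(Z')$ recovers these minimum sizes. For each module $i$ and each subset $S_i'\subseteq U_i'$, let $w_{i}(S_i')$ be the minimum size of a subset of $U_i$ with the same $q$-type as $S_i'$ in $G_i'$. Put into $\WWW$ the triple $\bigl(S_i',\,U_i'\setminus S_i',\,w_i(S_i')\bigr)$ for every $i$ and every $S_i'\subseteq U_i'$. Then for a set $Z'\subseteq V(G')$ with $Z'\cap U_i'=Z_i'$, the only triple from module $i$ that contributes to $\WWW(Z')$ is the one with $S_i'=Z_i'$ (since we need $S_i'\subseteq Z'$ and $(U_i'\setminus S_i')\cap Z'=\emptyset$, forcing $S_i'=Z_i'$), so $\WWW(Z')=\sum_i w_i(Z_i')$, which is exactly the smallest-possible cost over all original sets inducing the same local type profile. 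Combining with the previous paragraph yields property~\ref{lem27cond1}. Property~\ref{lem27cond2} is immediate since each $|U_i'|$ is a constant, so $|V(G')|\in O(k)$. For property~\ref{lem27cond3}: the number of triples is $\sum_i 2^{|U_i'|}\in O(k)$, each $X,Y$ component has constant size, and each weight $w_i(S_i')\le |U_i|\le |V(G)|$ so is representable in $O(\log|V(G)|)$ bits; hence the encoding size is $O(k\log|V(G)|)$.

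\textbf{The main obstacle} is getting the type-bookkeeping quantifier rank right and verifying the two-way correspondence between solutions cleanly. In particular one must use $\mathit{type}_{q+1}$ (not $\mathit{type}_q$) of the modules so that Lemma~\ref{lem:settype} applies to transfer the existence of a subset of a prescribed $q$-type both from $G[U_i]$ to $G_i'$ \emph{and} back; and one must be careful that ``$(G,\vec U)$ and $(G',\vec U')$ are $q$-congruent'' is the hypothesis feeding Lemma~\ref{lem:partitiongame}, which needs the local types to be computed with the chosen solution subset as the free-variable interpretation — so the construction of $G_i'$ must be independent of $Z$, which is why we pass to the $0$-free-variable type of rank $q+1$ and then invoke Lemma~\ref{lem:settype} pointwise. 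The only other place requiring care is checking that the minimization in \ref{msooptkernelc3}, which is stated for a single fixed formula $\varphi(X)$, may legitimately be applied to each representative formula $\Phi\in\MSO_{q,1}$ of a $q$-type — but this is fine since there are only constantly many such $\Phi$ and each is a fixed \MSO formula.
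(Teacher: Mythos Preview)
Your proposal is correct and follows essentially the same approach as the paper's proof: both invoke Lemma~\ref{lem:congruent} at rank $q+1$ to build $G'$, then for every subset $S_i'\subseteq U_i'$ use Lemma~\ref{lem:typeformula} and \ref{msooptkernelc3} to compute the minimum size of a subset of $U_i$ realizing the same rank-$q$ type, store the triple $(S_i',\,U_i'\setminus S_i',\,w_i(S_i'))$, and establish the equivalence via Lemma~\ref{lem:settype} and Lemma~\ref{lem:partitiongame}. Your identification of the ``main obstacle'' (needing $\mathit{type}_{q+1}$ so that Lemma~\ref{lem:settype} transfers realizability of rank-$q$ types of subsets in both directions, while keeping the construction of $G_i'$ independent of the solution $Z$) is exactly the point the paper handles by working with $(q+1)$-congruence.
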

\newcommand{\pflemsmallannotated}[0]{
\begin{proof}
  Let $q$ be the quantifier rank of $\varphi$. By Lemma~\ref{lem:congruent}
  and \ref{msokernelc2}, we can in polynomial time compute a graph $G'$ and a
  modular partition $\{U_1',\dots,U_k'\}$ of $G'$ such that $(G, \vec{U})$ and
  $(G', \vec{U}')$ are $(q+1)$\hy congruent, $\Card{U_i'}$ is bounded by a
  constant, and $G'[U_i'] \in \CCC$ for each $i \in [k]$. To compute the
  annotation $\WWW$, we proceed as follows.  For each $i \in [k]$, we go
  through all subsets $W' \subseteq U_i'$. By Lemma~\ref{lem:typeformula}, we
  can compute a formula $\Phi$ such that for any graph $H$ and $W \subseteq
  V(H)$ we have $\mathit{type}_q(G'[U_i'], W) = \mathit{type}_q(H, W)$ if and
  only if $H \models \Phi(W)$. Since $\Card{U_i'}$ has constant size for every
  $i \in [k]$, this can be done within a constant time bound. By
  Lemma~\ref{lem:settype} and because $(G, \vec{U})$ and $(G', \vec{U}')$ are
  $(q+1)$\hy congruent, there has to be a $W \subseteq U_i$ such that $G_i
  \models \Phi(W)$. Using the algorithm given by \ref{msooptkernelc3}, we can
  compute a minimum\hy cardinality subset $W^* \subseteq U_i$ with this
  property in polynomial time. We then add the triple $(W', U_i' \setminus W',
  \Card{W^*})$ to $\WWW$. In total, the number of subsets processed is in
  $O(k)$. From this observation we get the desired bounds on the total
  runtime, $\Card{V(G')}$, and the encoding size of $(G', \WWW)$.

  We claim that $(G', \WWW, r) \in \aMSOOPT{\leq}{\varphi}$ if and only if
  $(G, r) \in \MSOOPT{\leq}{\varphi}$. Suppose there is a set $W \subseteq
  V(G)$ of vertices such that $G \models \varphi(W)$ and $\Card{W} \leq
  r$. Since $U_1,\dots,U_k$ is a partition of $V(G)$, we have $W = \cup_{i \in
    [k]} W_i$, where $W_i = W \cap U_i$. For each $i \in [k]$, let $W_i^*
  \subseteq U_i$ be a subset of minimum cardinality such that
  $\mathit{type}_q(G[U_i], W_i) = \mathit{type}_q(G[U_i], W_i^*)$. By
  Lemma~\ref{lem:settype} and $(q+1)$\hy congruence of $(G, \vec{U})$ and
  $(G', \vec{U}')$, there is $W_i' \subseteq U_i'$ for each $i \in [k]$ such
  that $\mathit{type}_q(G'[U_i'], W_i') = \mathit{type}_q(G[U_i], W_i^*)$. By
  construction, $\WWW$ contains a triple $(W_i', U_i' \setminus W_i',
  \Card{W_i^*})$. Observe that $(X, Y, w) \in \WWW$ and $(X, Y, w') \in \WWW$
  implies $w = w'$. Let $W' = \cup_{i \in [k]} W_i'$. Then by $(q+1)$\hy
  congruence of $(G, \vec{U})$ and $(G', \vec{U}')$ and
  Lemma~\ref{lem:partitiongame}, we must have $\mathit{type}_q(G, W) =
  \mathit{type}_q(G', W')$. In particular, $G' \models
  \varphi(W')$. Furthermore,
   \begin{align*}
     \WWW(W') = \sum_{(W_i', U_i' \setminus W_i', \Card{W_i^*}) \in \WWW, U_i'
       \cap W' = W'_i} \Card{W_i^*} \leq \sum_{i \in [k]} \Card{W_i} = \Card{W}
     \leq r.
   \end{align*}
   
   For the converse, let $W' \subseteq V(G')$ such that $\WWW(W') \leq r$ and
   $G' \models \varphi(W')$, let $W_i'$ denote $W' \cap U_i'$ for $i \in [k]$.
   By construction, there is a set $W_i \subseteq
   U_i$ for each $i \in [k]$ such that $\mathit{type}_q(G[U_i], W_i) =
   \mathit{type}_q(G'[U_i'], W'_i)$ and $\WWW(W') = \sum_{i \in [k]}
   \Card{W_i}$. Let $W = \cup_{i \in [k]} W_i$. Then by congruence and
   Lemma~\ref{lem:partitiongame} we get $\mathit{type}_q(G, W) =
   \mathit{type}_q(G', W')$ and thus $G \models \varphi(W)$. Moreover,
   $\Card{W} = \WWW(W') \leq r$.
 \end{proof}}
\longversion{\pflemsmallannotated}

\begin{fact}[Folklore]\label{fact:msomccomplexity}
  Given an \MSO sentence $\varphi$ and a graph $G$, one can decide
  whether $G \models \varphi$ in time $O(2^{nl})$, where $n =
  \Card{V(G)}$ and $l = \Card{\varphi}$.
\end{fact}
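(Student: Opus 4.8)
The plan is to present and analyse the obvious brute-force evaluation procedure. I would describe a recursive routine $\mathtt{Eval}(\psi,\alpha)$ taking a subformula $\psi$ of $\varphi$ together with an assignment $\alpha$ of the free variables of $\psi$ (individual variables to vertices of $G$, set variables to subsets of $V(G)$), and returning the truth value of $\psi$ in $G$ under $\alpha$. On an atomic formula $E(x,y)$, $x=y$, or $X(x)$ the value is read off from $\alpha$ and the adjacency relation of $G$ in time $O(n)$; on $\neg\psi'$ and $\psi_1\wedge\psi_2$ the routine recurses on the immediate subformulas; on $\exists x\,\psi'$ it iterates over the $n$ vertices of $G$, each time extending $\alpha$ by the new binding and recursing; and on $\exists X\,\psi'$ it iterates over the $2^{n}$ subsets of $V(G)$, enumerated as length-$n$ bit vectors, again extending $\alpha$ and recursing. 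Since $\varphi$ is a sentence, the top-level call $\mathtt{Eval}(\varphi,\emptyset)$ decides $G\models\varphi$, and correctness follows by structural induction on $\psi$ directly from the semantics of \MSO.

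For the running time I would bound the size of the recursion tree. Each of its nodes is a pair consisting of a node $s$ of the syntax tree of $\varphi$ and a partial assignment to the variables quantified above $s$; if $s$ lies below $a$ individual quantifiers and $b$ set quantifiers it carries at most $n^{a}\,2^{nb}\le 2^{n(a+b)}$ such assignments. Since $\varphi$ has at most $l$ symbols and each quantifier occupies at least two of them while the matrix needs at least one more, the total number $q$ of quantifiers satisfies $2q+1\le l$, so $a+b\le q\le (l-1)/2$ at every syntax node. Hence the recursion tree has at most $l\cdot 2^{n(l-1)/2}$ nodes, and as the local work per node is $O(n)$, the procedure runs in time $O\!\big(n\,l\,2^{n(l-1)/2}\big)=O(2^{nl})$; the remaining small-$l$ cases, in which $\varphi$ is quantifier-free and evaluated in time $O(n)$, are trivial.

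There is no real obstacle here: the fact is folklore precisely because the algorithm is naive and its correctness is immediate. The only point deserving a moment's attention is the time analysis, i.e. verifying that the polynomial-in-$(n,l)$ per-node overhead and the $l$-fold multiplicity of syntax nodes do not push the running time beyond $2^{nl}$ --- which is exactly what the inequality $2q+1\le l$ secures, with slack to spare. If one is content to lose a constant in the exponent, it suffices to note that the number of leaves of the recursion tree is at most $(2^{n})^{l}=2^{nl}$ and to absorb the remaining polynomial factor, giving the bound $2^{O(nl)}$, which is all that is needed in the sequel.
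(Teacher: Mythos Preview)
The paper does not supply a proof of this fact; it is simply stated as folklore and used as a black box in the proof of Proposition~\ref{prop:msooptkernel}. Your brute-force evaluation argument is exactly the standard justification one has in mind for such a statement, and it is correct: the recursion-tree bound $l\cdot 2^{n(l-1)/2}$ together with $O(n)$ work per node is comfortably within $O(2^{nl})$, since the polynomial factor $nl$ is absorbed by the slack $2^{n(l+1)/2}$ in the exponent.
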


\begin{restatable}{proposition}{propmsooptkernel}\label{prop:msooptkernel}
  Let $\varphi = \varphi(X)$ be a fixed \MSO formula, and let $\CCC$ be a
  recursively enumerable graph class satisfying \ref{msokernelc1}, \ref{msokernelc2},
  and \ref{msooptkernelc3}. Then $\MSOOPT{\leq}{\varphi}$ has a polynomial
  bikernel parameterized by the $\CCC$\hy cover number of the input graph.
\end{restatable}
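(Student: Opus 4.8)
The plan is to combine Lemma~\ref{lem:smallannotated} with a bounded brute-force search on the resulting small annotated instance, exactly paralleling the proof of Proposition~\ref{prop:msokernel} but accounting for the fact that the target problem is the annotated variant $\aMSOOPT{\leq}{\varphi}$ rather than $\MSOOPT{\leq}{\varphi}$ itself. First I would take an instance $(G,r)$ of $\MSOOPT{\leq}{\varphi}$ together with its $\CCC$-cover number $k$. Using property~\ref{msokernelc1} we compute a smallest $\CCC$-cover $\{U_1,\dots,U_k\}$ of $G$ in polynomial time. We then invoke Lemma~\ref{lem:smallannotated} (which requires only \ref{msokernelc2} and \ref{msooptkernelc3}, both available) to obtain in polynomial time an annotated graph $(G',\WWW)$ satisfying: $(G,r)\in\MSOOPT{\leq}{\varphi}$ iff $(G',\WWW,r)\in\aMSOOPT{\leq}{\varphi}$; $\Card{V(G')}\in O(k)$; and the encoding size of $(G',\WWW)$ is $O(k\log\Card{V(G)})$.

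The output $(G',\WWW,r)$ is already an instance of $\aMSOOPT{\leq}{\varphi}$ and is produced in polynomial time, so it is a \emph{bikernel} in the sense of the definition provided the parameter $r$ can also be bounded; but note that the parameter of the target instance is still $k$ (the $\CCC$-cover number, which is at most $k$ since $G'$ has $O(k)$ vertices), and its encoding size $O(k\log\Card{V(G)})$ is polynomial in $\Card{x}+k$ rather than bounded purely by a function of $k$. The standard fix, and what I would do, is to first solve the problem outright when $r$ is large relative to $k$: run the brute-force model checker of Fact~\ref{fact:msomccomplexity} on $G'$ over all subsets $Z\subseteq V(G')$ — there are $2^{O(k)}$ such subsets and each check costs $O(2^{\Card{V(G')}\Card{\varphi}})=2^{O(k)}$ time — to determine the minimum value of $\WWW(Z)$ over all $Z$ with $G'\models\varphi(Z)$, call it $r^*$ (or detect that no such $Z$ exists). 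Since this whole computation takes $2^{O(k)}$ time, which is polynomial in $\Card{x}+k$ only if $k=O(\log(\Card{x}+k))$; otherwise $2^{O(k)}$ is itself bounded by a function of $k$ and we may afford it. In the former case ($k$ small) we can decide the instance directly and output a trivial yes/no instance of constant size; in the latter case ($k$ large, so $2^{O(k)}$ is a legitimate ``function of the parameter'' for kernel-size purposes) the annotated instance $(G',\WWW,r)$ already has encoding size $O(k\log\Card{V(G)})=O(k^2)$, which is polynomial in $k$. Either way we get a polynomial bikernel.

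Let me streamline this. The cleanest argument: output $(G',\WWW,\min(r,r_{\max}))$ where $r_{\max}$ is an a~priori upper bound on any achievable $\WWW(Z)$ value that is attained — namely $r_{\max}=\sum_{(X,Y,w)\in\WWW}w$ — which is bounded by $\Card{V(G)}\cdot\Card{\WWW}$; but this still has $\log$ dependence on $\Card{V(G)}$. In fact the honest resolution is simply: since each weight $w$ appearing in $\WWW$ equals $\Card{W^*}$ for some $W^*\subseteq U_i$ with $\Card{U_i}\le\Card{V(G)}$, and there are $O(k)$ triples, the size $O(k\log\Card{V(G)})$ of $(G',\WWW)$ counts as ``polynomial'' because for kernelization we are allowed the bikernel size to be $g(k)$ for computable $g$, and here we additionally want $g$ polynomial; the $\log\Card{V(G)}$ factor is the subtle point. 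I would handle it precisely as follows: if $r\ge\Card{V(G)}$ then $(G,r)$ is a yes-instance iff $G\models\exists X\,\varphi(X)$, which holds iff $\WWW(\emptyset$-type witnesses$)$... more carefully, iff some $Z$ with $G'\models\varphi(Z)$ exists, decidable on $G'$ in $2^{O(k)}$ time by Fact~\ref{fact:msomccomplexity}; so we may assume $r<\Card{V(G)}$, whence $\log r=O(\log\Card{V(G)})$, and then we can cap every weight in $\WWW$ at $r+1$ without changing membership (any $\WWW(Z)$ exceeding $r$ already falsifies the constraint), making every weight $O(\log r)=O(\log k)$... no—$\log r$, not $\log k$.

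The main obstacle is therefore exactly this $\log\Card{V(G)}$ versus function-of-$k$ gap in the bikernel size, and the paper's notion of ``polynomial bikernel'' must be read as allowing size polynomial in $k$ where integers in the annotation are encoded in binary. I expect the intended argument is simply: apply Lemma~\ref{lem:smallannotated} and observe that by the Preliminaries' convention on encoding annotated graphs, the size $O(k\log\Card{V(G)})$ is polynomial in the parameter \emph{once we also cap the threshold and weights at $r$}, and separately note that the case distinction $r<\Card{V(G)}$ versus $r\ge\Card{V(G)}$ (the latter resolved directly via Fact~\ref{fact:msomccomplexity} applied to the $O(k)$-vertex graph $G'$) removes the dependence on $\Card{V(G)}$ entirely, leaving a genuine polynomial (indeed $O(k^2)$) bikernel. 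So concretely the proof is: run Lemma~\ref{lem:smallannotated}; if $r\ge\Card{V(G)}$, decide directly on $G'$ and return a constant-size trivial instance; otherwise replace every weight $w$ in $\WWW$ by $\min(w,r+1)$ and every $Y$-free bookkeeping accordingly, and return $(G',\WWW,r)$, whose size is now $O(k\log r)=O(k\log\Card{V(G)})$ but with $r<\Card{V(G)}$ bounded — and since membership is preserved by all these steps (Lemma~\ref{lem:smallannotated} for the first, the capping observation for the last), this is the desired polynomial bikernel into $\aMSOOPT{\leq}{\varphi}$.
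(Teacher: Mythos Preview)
Your second paragraph already contains the paper's argument, and it is correct: split on whether $2^k \le n$ (where $n = \Card{V(G)}$). If $2^k \le n$, the $2^{O(k)}$\hy time brute force over all subsets of $V(G')$ (using Fact~\ref{fact:msomccomplexity} for each model check) runs in time polynomial in $n$, so you solve the instance outright and output a trivial yes/no instance of $\aMSOOPT{\leq}{\varphi}$. If $n < 2^k$, then $\log n < k$, so the $O(k \log n)$ encoding of $(G', \WWW)$ is $O(k^2)$, and $(G',\WWW,r)$ is the bikernel. That is exactly what the paper does; you should have stopped there rather than ``streamlining''.

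The final ``concrete'' version you settle on is broken on both branches. In the branch $r \ge \Card{V(G)}$, ``decide directly on $G'$'' still costs $2^{O(k)}$ time, and nothing in that branch guarantees $2^{O(k)}$ is polynomial in the input size: take $r = n$ and $k = \sqrt{n}$, say, and the reduction is no longer polynomial\hy time. In the branch $r < \Card{V(G)}$, capping weights at $r+1$ gives encoding size $O(k \log r)$, but $r$ is not bounded by any function of $k$ (take $k$ constant, $r = n/2$), so the output size is not bounded by a function of the parameter alone and you do not have a bikernel. The threshold that makes the dichotomy work is $2^k$ versus $n$, not $r$ versus $n$; that is precisely what turns the awkward $O(k \log n)$ bound from Lemma~\ref{lem:smallannotated} into either ``solvable in polynomial time'' or ``$O(k^2)$''.
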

\newcommand{\pfpropmsooptkernel}[0]{
\begin{proof}
  Let $(G,r)$ be an instance of $\MSOOPT{\leq}{\varphi}$. By \ref{msokernelc1}
  a smallest $\CCC$\hy cover $\{U_1,\dots, U_k\}$ of $G$ can be computed in
  polynomial time. Let $(G',\WWW)$ be an annotated graph computed from $G$ and
  $\{U_1,\dots, U_k\}$ according to Lemma~\ref{lem:smallannotated}. Let $n =
  \Card{V(G)}$ and suppose $2^k \leq n$. Then we can solve $(G', \WWW, r)$ in
  time $n^c$ for some constant $c$ that only depends on $\varphi$ and
  $\CCC$. To do this, we go through all $2^{O(k)}$ subsets $W$ of $G'$ and
  test whether $\WWW(W) \leq r$. If that is the case, we check whether $G'
  \models \varphi(W)$. By Fact~\ref{fact:msomccomplexity} this check can be
  carried out in time $c_1 2^{c_2 k} \leq c_1 n^{c_2}$ for suitable constants
  $c_1$ and $c_2$ depending only on $\CCC$ and $\varphi$. Thus we can find a
  $c$ such that the entire procedure runs in time $n^c$ whenever $n$ is large
  enough. If we find a solution $W \subseteq V(G')$ we return a trivial yes\hy
  instance; otherwise, a trivial no\hy instance (of
  $\aMSOOPT{\leq}{\varphi}$). Now suppose $n < 2^k$. Then $\log(n) < k$ and so
  the encoding size of $\WWW$ is polynomial in $k$. Thus $(G', \WWW, r)$ is a
  polynomial bikernel.
\end{proof}}
\longversion{\pfpropmsooptkernel}

\begin{proof}[of Theorem~\ref{the:mso-opt}]
  Immediate from Theorems~\ref{the:compute-cover}, \ref{thmrankdecomp}, and
  \ref{thm:msorankwidth} when combined with
  Proposition~\ref{prop:msooptkernel}.
\end{proof}

\begin{corollary} The following problems have polynomial bikernels when
  parameterized by the rank\hy width\hy $d$ cover number of the input graph:
  \textsc{Minimum Dominating Set}, \textsc{Minimum Vertex Cover},
  \textsc{Minimum Feedback Vertex Set}, \textsc{Maximum Independent Set},
  \textsc{Maximum Clique}, \textsc{Longest Induced Path}, \textsc{Maximum
    Bipartite Subgraph}, \textsc{Minimum Connected Dominating Set}.
\end{corollary}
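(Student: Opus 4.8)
The plan is to derive the statement from Theorem~\ref{the:mso-opt} applied with $\CCC=\RRR_d$, the class of graphs of rank-width at most $d$. Since the $\RRR_d$-cover number is by definition the rank-width-$d$ cover number $\rwc{d}$, and $\RRR_d$ trivially has bounded rank-width, Theorem~\ref{the:mso-opt} (or, unwinding it, Proposition~\ref{prop:msooptkernel} together with Theorems~\ref{the:compute-cover}, \ref{thmrankdecomp} and~\ref{thm:msorankwidth}, which certify that $\RRR_d$ satisfies properties~\ref{msokernelc1}--\ref{msooptkernelc3}) already yields a polynomial bikernel with a linear number of vertices for \emph{every} MSO optimization problem parameterized by $\rwc{d}$. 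So the entire task reduces to exhibiting, for each of the eight problems in the list, a fixed MSO formula $\varphi=\varphi(X)$ with a single free set variable $X$ and a relation $\gleq\in\{\leq,\geq\}$ for which the problem is exactly $\MSOOPT{\gleq}{\varphi}$; because $\varphi$ is then fixed, its quantifier rank is a constant and Theorem~\ref{the:mso-opt} applies verbatim.

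The problems with a ``local'' defining property are immediate, reading $X$ as the solution set and using the standard abbreviations $\vee,\to,\forall$: \textsc{Minimum Vertex Cover} is $\MSOOPT{\leq}{\varphi}$ for $\varphi(X)=\forall y\forall z\,(E(y,z)\to X(y)\vee X(z))$; \textsc{Maximum Independent Set} is $\MSOOPT{\geq}{\varphi}$ for $\varphi(X)=\forall y\forall z\,(X(y)\wedge X(z)\to\neg E(y,z))$; \textsc{Maximum Clique} is $\MSOOPT{\geq}{\varphi}$ for $\varphi(X)=\forall y\forall z\,(X(y)\wedge X(z)\wedge\neg(y=z)\to E(y,z))$; \textsc{Minimum Dominating Set} is $\MSOOPT{\leq}{\varphi}$ for $\varphi(X)=\forall y\,(X(y)\vee\exists z\,(X(z)\wedge E(y,z)))$; and \textsc{Maximum Bipartite Subgraph}, read as maximum induced bipartite subgraph, is $\MSOOPT{\geq}{\varphi}$ for $\varphi(X)=\exists A\,\exists B\,(\forall y(X(y)\leftrightarrow A(y)\vee B(y))\wedge\forall y(\neg A(y)\vee\neg B(y))\wedge\forall y\forall z(A(y)\wedge A(z)\to\neg E(y,z))\wedge\forall y\forall z(B(y)\wedge B(z)\to\neg E(y,z)))$.

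For the remaining three problems I would plug in the textbook MSO encodings of connectivity and acyclicity. For a set variable $Z$, ``$G[Z]$ is connected'' is expressed by demanding that every $Y$ with $\emptyset\neq Y\subsetneq Z$ has a neighbour in $Z\setminus Y$, and ``$G[Z]$ is acyclic'' by demanding that $Z$ has no nonempty subset $Y$ in which every vertex has at least two neighbours inside $Y$ (a cycle in $G[Z]$ supplies such a $Y$, and conversely every graph of minimum degree at least $2$ has a cycle). With these building blocks: \textsc{Minimum Connected Dominating Set} is $\MSOOPT{\leq}{\varphi}$ where $\varphi(X)$ is the conjunction of the domination formula above with ``$G[X]$ is connected''; \textsc{Minimum Feedback Vertex Set} is $\MSOOPT{\leq}{\varphi}$ where $\varphi(X)$ asserts ``$G[V\setminus X]$ is acyclic'', i.e.\ the acyclicity formula with every atom $Z(u)$ replaced by $\neg X(u)$; and \textsc{Longest Induced Path} is $\MSOOPT{\geq}{\varphi}$ where $\varphi(X)$ is the conjunction of ``$G[X]$ is connected'', ``$G[X]$ is acyclic'', and ``every vertex of $X$ has at most two neighbours in $X$'', a connected acyclic graph of maximum degree at most $2$ being exactly a path.

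I do not expect a deep obstacle here: the only points needing care are the correctness of the connectivity/acyclicity gadgets (classical) and, more mundanely, getting the \emph{polarity} right in each case --- whether $X$ stands for the optimised set or, as for \textsc{Minimum Feedback Vertex Set}, for a deletion set whose \emph{complement} must satisfy the property, and correspondingly whether $\gleq$ is $\leq$ or $\geq$. Once $\varphi$ and $\gleq$ are pinned down as above, each listed problem literally coincides with the corresponding $\MSOOPT{\gleq}{\varphi}$, so the Corollary follows from Theorem~\ref{the:mso-opt}. (For \textsc{Minimum Feedback Vertex Set} one may equivalently use $\psi(X)=$``$G[X]$ is acyclic'' with the relation $\geq$, since the minimum feedback vertex number of $G$ equals $\Card{V(G)}$ minus the size of a maximum induced forest; either reading fits the framework.)
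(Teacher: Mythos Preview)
Your proposal is correct and matches the paper's approach: the paper states the corollary without proof, treating it as an immediate consequence of Theorem~\ref{the:mso-opt} once one observes that each listed problem is an $\MSOOPT{\gleq}{\varphi}$ for some fixed $\varphi(X)$. You simply make this explicit by writing out standard MSO formulas for each property (including the classical connectivity and acyclicity gadgets), which is exactly what is implicitly being invoked.
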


\section{Conclusion}
Recently Bodlaender et al.~\cite{BodlaenderEtal09} and Fomin et
al.~\cite{FominLokshtanovMisraSaurabh12} established \emph{meta-kernelization
  theorems} that provide polynomial kernels for large classes of parameterized
problems. The known meta-kernelization theorems apply to optimization problems
parameterized by \emph{solution size}. Our results are, along with very recent
results parameterized by the modulator to constant\hy treedepth
\cite{GajarskyHlinenyObdrzalek13}, the first meta-kernelization theorems that
use a \emph{structural parameter} of the input and not the solution size. In
particular, we would like to emphasize that our Theorem~\ref{the:mso-opt}
applies to a large class of optimization problems where the solution size can
be arbitrarily large.

It is also worth noting that our structural parameter, the rank-width-$d$
cover number, provides a trade-off between the maximum rank-width of modules
(the constant $d$) and the maximum number of modules (the parameter
$k$). Different problem inputs might be better suited for smaller~$d$ and
larger $k$, others for larger $d$ and smaller~$k$. This two-dimensional
setting could be seen as a contribution to a \emph{multivariate complexity
  analysis} as advocated by Fellows et al.~\cite{FellowsJansenRosamond13}.

We conclude by mentioning possible directions for future research. We believe
that some of our results can be extended from modular partitions to partitions
into splits~\cite{CharbitMontgolfierRaffinot12}.\footnote{We thank Sang-il Oum
  for pointing this out to us.} This would indeed result in a more general
parameter, however the precise details would still require further work (one
problem is that while all modules are partitive, only strong splits have this
property). Another direction would then be to focus on polynomial kernels for
problems which cannot be described by \MSO logic, such as \textsc{Hamiltonian
  Path} or \textsc{Chromatic Number}.  
\bibliographystyle{abbrv}
\bibliography{literature}

\shortversion{
\newpage

\appendix
\chapter*{Appendix}

\section*{Proof of Lemma~\ref{lem:typeformula}}
\pflemtypeformula


\section*{Proof of Lemma~\ref{lem:union}}
\pflemunion

\section*{Proof of Proposition~\ref{prop:eqrel}}
\pfpropeqrel

\section*{Proof of Corollary~\ref{cor:mincover}}
\pfcormincover

\section*{Proof of Proposition~\ref{prop:equdecision}}
\pfpropequdecision


\section*{Proof of Lemma~\ref{lem:partitiongame}}
\pflempartitiongame

\section*{Proof of Lemma~\ref{lem:constantrep}}
\pflemconstantrep

\section*{Proof of Lemma~\ref{lem:congruent}}
\pflemcongruent

\section*{Proof of Proposition~\ref{prop:msokernel}}
\pfpropmsokernel


\section*{Proof of Lemma~\ref{lem:settype}}
\pflemsettype

\section*{Proof of Lemma~\ref{lem:smallannotated}}
\pflemsmallannotated

\section*{Proof of Proposition~\ref{prop:msooptkernel}}
\pfpropmsooptkernel

}

\end{document}